\title{Disjunctive Complexity}
\author{N. Ivanov\inst{3}
  \and A. Rubtsov\thanks{The work was done within the framework of the HSE University Basic Research Program.}\inst{2}\inst{3}
  \and
M. Vyalyi\thanks{The work was done within the framework of the HSE University Basic Research Program. Supported  in part by the state assignment topic no. FFNG-2024-0003.}
\inst{1}\inst{2}\inst{3}}
\institute{Federal Research Center “Computer Science and Control” of the Russian Academy of Sciences\\42 Vavilov Street, 119333 Moscow, Russia\\
\and Moscow Institute of Physics and Technology (National Research University),\\1A, bld. 1  Kerchenskaya str, 117303, Moscow, Russia\\
\and National Research University Higher School of Economics,\\ 11 Pokrovsky Blvd, 109028 Moscow, Russia\\
\email{nikita.iva47@gmail.com}\\
\email{rubtsov99@gmail.com}\\
\email{vyalyi@gmail.com}
}
\date{}
\spnewtheorem{Def}{Definition}{\bfseries}{\upshape}
\spnewtheorem{prop}{Proposition}{\bfseries}{\itshape}
\spnewtheorem{Claim}{Claim}{\bfseries}{\itshape}
\spnewtheorem*{known}{Theorem}{\bfseries\upshape}{\itshape}
\spnewtheorem*{Lemma}{Lemma}{\bfseries\upshape}{\itshape}
\let\al\alpha
\let\es\varnothing
\renewcommand*\P{\ensuremath{\mathrm {P}}}
\newcommand*\NP{\ensuremath{\mathrm {NP}}}
\def\ZZ{\mathbb Z}
\def\A{\ensuremath{\mathcal A}}
\def\C{\ensuremath{\mathcal C}}
\def\F{\ensuremath{\mathcal F}}
\newcommand*\poly{\ensuremath{\mathrm {poly}}}
\def\C{\ensuremath{\mathcal C}}
\let\sm\setminus
\def\F#1{f_{#1}}
\def\Tg{\Delta}
\def\Two{\Delta_{1}}
\def\Tas{\Delta_{\mathrm{as}}}
\def\Ts{\Delta_{\mathrm{s}}}
\def\bl{\bar\ell}
\def\BP{\mathop{\mathrm{NBP}}\nolimits}
\def\mBP{\mathop{\mathrm{NBP}_{+}}\nolimits}
\def\Eq{\mathop{\mathrm{Eq}}\nolimits}
\def\XOR{\mathop{\mathrm{XOR}}\nolimits}
\def\PT{\mathop{\mathrm{P3f}}\nolimits}
\begin{document}
\maketitle
\begin{abstract}
  A recently introduced  measure of Boolean functions complexity--disjunc\-tive complexity (DC)--is compared with other complexity measures: the space complexity of streaming algorithms and the complexity of nondeterministic branching programs (NBP).
  We show that DC is incomparable with NBP. Specifically, we present a function that has low NBP but has subexponential DC.
  Conversely, we provide arguments based on computational complexity conjectures to show that DC can superpolynomially exceed NBP in certain cases. Additionally, we prove that the monotone version of NBP complexity is strictly weaker than DC.

  We prove that the  space complexity of one-pass streaming algorithms is strictly weaker than DC. Furthermore, we introduce a generalization of streaming algorithms that captures the full power of DC.
This generalization can be expressed in terms of nondeterministic algorithms that irreversibly write 1s to entries of a Boolean vector (i.e., changes from 1 to 0 are not allowed).

Finally, we discuss an unusual phenomenon in disjunctive complexity: the existence of \emph{uniformly hard functions}. These functions exhibit the property that their disjunctive complexity is maximized, and this property extends to all functions dominated by them.
\end{abstract}

\pagestyle{plain}

\section{Introduction}

In this paper,
we study a novel complexity measure for Boolean functions, which we refer to as \emph{disjunctive complexity}.
A function $f\colon\{0,1\}^n\to \{0,1\}$
is represented succinctly by describing its accepting set  $f^{-1}(1)$ through paths in a labeled directed graph. We call these graphs \emph{transition graphs}. At first glance, this representation resembles the branching programs model (see~\cite{Jukna}). However, as we discuss in detail in Section~\ref{sect:BP}, the two models are  quite different. From another point of view,  disjunctive complexity generalizes the space complexity of streaming algorithms (see~\cite{Mu05}). In Section~\ref{sec:streaming}, we discuss this connection and propose several variants of disjunctive complexity to capture both the similarities and differences between the space complexity of streaming algorithms and our model.

Disjunctive complexity was introduced in~\cite{RV24}
as a tool for investigating the regular realizability problem in Formal Language Theory. In that work, the measure was defined for families of subsets of a finite set~$U$.
It is straightforward to observe that such families correspond naturally to Boolean functions: the indicator function of a subset is essentially a $(0,1)$-assignment of Boolean variables indexed by $U$, therefore a family of subsets corresponds to a Boolean function. 

In~\cite{RV24} several basic results on disjunctive complexity were obtained.  Notably,  lower bounds for this measure are relatively easy to prove, and an explicit class of functions with exponential disjunctive complexity was presented in the same work. Our main motivation behind this work is to deepen our understanding of the relationships between  disjunctive complexity and other complexity measures for Boolean functions. We anticipate that disjunctive complexity will find applications in  Boolean functions analysis.

\subsection*{Our Contribution}

We explore the relationship between streaming algorithms for computing Boolean functions and the model of representing Boolean functions using transition graphs. The most important characteristic of a streaming algorithm is its space complexity. Lower bounds of the space complexity are well-elaborated. They are based on communication complexity, see e.g. \cite{Jukna,AMS96,GT01,CGV20},  or on information-theoretic arguments, see~\cite{BJKS02}.

In Section~\ref{sec:streaming}, we demonstrate that streaming algorithms can be viewed as a special case of the transition graph representation model for Boolean functions. This connection provides a pathway to prove lower bounds on the space complexity of streaming algorithms by establishing lower bounds on disjunctive complexity. In the latter case, combinatorial arguments can be employed. In~\cite{RV24}, the separability property was used to derive such lower bounds. Here, we introduce a new tool based on the closure properties of the labeling functions in transition graphs. This tool yields subexponential lower bounds for very simple functions, such as the indicator function of $P_3$-free graphs (see Section~\ref{sect:BP} for the proof) or even simpler functions like the indicator function of cliques (see Section~\ref{sec:uniform-hard}). These results suggest that transition graphs may serve as an useful tool for proving lower bounds on the space complexity of streaming algorithms.

Up to a polynomial blow-up, the transition graph representation model can be interpreted as a way of generating the  \emph{accepting set} of a Boolean function $f$ (i.e. the set $f^{-1}(1)$) using a non-deterministic algorithm that writes units to a Boolean vector entries irreversibly. Comparing this model with streaming algorithms leads to several restricted variants of disjunctive complexity, which we present in Section~\ref{sec:streaming}. We prove two separation results for these restricted measures (see Theorems~\ref{stream<a.stream} and~\ref{write-once<general} in Section~\ref{sec:streaming}).

In Section~\ref{sect:BP}, we compare branching program complexity and disjunctive complexity. Assuming the complexity-theoretic conjecture $\NP\nsubseteq \P/\poly$, these complexity measures appear to be incomparable. We provide corresponding examples in this section. In one direction the separation is unconditional.

In Section~\ref{sec:uniform-hard}, we discuss an  interesting  phenomenon specific to disjunctive complexity. We prove that there exist functions  $f$ such that for all functions $g$ dominated by $f$, the disjunctive complexity of \( g \) is maximized and equals the size of $g^{-1}(1$). This phenomenon is specific to disjunctive complexity and has no known counterpart in other common complexity measures for Boolean functions.

\section{Definitions and Initial Observations}\label{sec:defs}

We allow directed graphs to have loops and multiple edges.
A \emph{transition graph} is a~tuple $G=(V,s,t,E,\ell)$, where
 $V$ is  the set of vertices, $E$ is the set of directed edges,  $s$ is the initial vertex, $t$ is the terminal vertex, and $\ell\colon E \to 2^U$ is the labeling function, where $U$ is a finite 
set, which we subsequently use as the index set for Boolean variables in the definition of the corresponding Boolean function.
The size $|G|$ of a transition graph is defined as the number of edges.
The set $\ell(e)$, $e\in E$, is called the \emph{edge label}. The \emph{label set}  $S = \ell(P)$ of a path $P$  is the union of all edge labels along the path. We also say that $P$ \emph{is marked} by~$S$.

A Boolean function  $\F{G}(x)$ described by $G$ is defined as follows. The input Boolean vector $x$ has dimension $|U|$ with components (variables) $x_i$, $i\in U$. An input vector $x$ has the corresponding \emph{unit set} $U_x = \{i\in U: x_i = 1\}$, i.e. the set of variables taking the value~1 in $x$.
So, $\F{G}(x)=1$ if the corresponding unit set is the label set of a path from  $s$ to $t$ (referred to as an $(s,t)$-path for brevity). Formally,
\[
\F{G}(x) = 1 \quad\Longleftrightarrow\quad \exists P : U_x = \ell(P), \
\text{where $P$ is an $(s,t)$-path.}
\]
We say that a transition graph $G$ \emph{represents} the function $\F{G}$. 
Graphs $G'$ and $G''$ are called \emph{equivalent} if $\F{G'} = \F{G''}$.

Our main concern are bounds on the sizes of transition graphs up to  a~polynomial blow-up. 
 At this level of precision, it suffices to focus on specific classes of transition graphs. For instance, only vertices lying on $(s,t)$-paths affect $ \F{G} $. Therefore, vertices that are unreachable from $ s $ or \emph{dead} (defined as vertices from which $ t $ is unreachable) can be removed from the graph. This yields an equivalent transition graph without unreachable or dead vertices. Note that this transformation can be performed in time polynomial in the size of the transition graph.

 Another example is restricting the labeling function to the empty label and \emph{singletons} ($1$-element sets). An edge $ (u_0, u_n) $ marked by a nonempty set $ S $ of size $ n $ can be replaced by a path $ u_0, u_1, \dots, u_n $, where $ u_1, \dots, u_{n-1} $ are new vertices, and each edge $ (u_i, u_{i+1}) $ is marked by a distinct singleton from $ S $. It is clear that this transformation produces an equivalent transition graph and takes a time polynomial in $|U|$, $|G|$.

An important result from \cite{RV24} is the following:

\begin{prop}[\cite{RV24}]
 For any transition graph $G$, there exists an equivalent directed acyclic graph (DAG) $G'$, which can be constructed in time polynomial in $ |U| $, $ |G| $.
\end{prop}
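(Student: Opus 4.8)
The plan is to establish a polynomial upper bound on the length of a shortest witnessing walk, and then obtain the DAG by the standard layering (time-unrolling) construction. By the normalizations already discussed in the excerpt, I may assume that every edge label is either empty or a singleton and that $G$ has no unreachable or dead vertices; neither assumption is essential, but both streamline the bookkeeping.

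The key lemma is a length bound. Suppose $\F{G}(x)=1$ and fix an $(s,t)$-walk $P$ with $\ell(P)=U_x=:S$; I claim there is an $(s,t)$-walk $P'$ with $\ell(P')=S$ of length at most $(|U|+1)\,|V|$. The point is that, because the label of a walk is the \emph{union} of its edge labels, order and multiplicity of traversals are irrelevant, so each element of $S$ need only be secured once, by a short independent detour. Concretely, let $G_S$ be the subgraph consisting of all edges whose label is contained in $S$; every edge of $P$ has label $\subseteq S$, so $P$ is a walk inside $G_S$. Scanning $P$, mark the at most $|U|$ edges at which a previously uncollected element of $S$ first appears (the ``checkpoints''). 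These split $P$ into at most $|U|+1$ connecting segments, each going from some vertex $a$ to some vertex $b$ entirely within $G_S$; hence $b$ is reachable from $a$ in $G_S$ and may be reconnected by a shortest path of length at most $|V|-1$. Replacing every segment by such a path keeps all checkpoints, uses only edges of $G_S$ (so it introduces no label outside $S$), and still collects every element of $S$, yielding $P'$ with $\ell(P')=S$ and length at most $(|U|+1)(|V|-1)+|U|\le(|U|+1)|V|$.

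Writing $L:=(|U|+1)|V|$, the DAG $G'$ is produced by unrolling $G$ into $L+1$ layers. I take a copy $v^{(j)}$ of each vertex $v$ for $0\le j\le L$, and for every edge $(a,b)$ of $G$ with label $\lambda$ I install, for each $j<L$, an edge $a^{(j)}\to b^{(j+1)}$ with the same label $\lambda$; finally I add a fresh terminal $t^{*}$ together with empty-labeled edges $t^{(j)}\to t^{*}$ for all $j$, and take $s^{(0)}$ as the source. Since the layer index strictly increases along every non-terminal edge, $G'$ is acyclic; its size is $O(L\cdot|E|)$, and it is built in time polynomial in $|U|,|G|$. Every $(s^{(0)},t^{*})$-path of $G'$ projects to an $(s,t)$-walk of $G$ of length at most $L$ with the same label set, so $\F{G'}(x)=1$ implies $\F{G}(x)=1$; conversely, by the length lemma every accepted $x$ admits an $(s,t)$-walk of length at most $L$, which lifts to an $(s^{(0)},t^{*})$-path of the same label set, so $\F{G}(x)=1$ implies $\F{G'}(x)=1$. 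The empty-labeled edges into $t^{*}$ absorb witnessing walks of all admissible lengths without affecting label sets, which also covers the degenerate case $s=t$. Hence $\F{G'}=\F{G}$.

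I expect the length bound to be the only real obstacle: once one observes that union-labeling makes the order and multiplicity of edge traversals irrelevant, so each required label can be secured by a separate short detour, the polynomial bound follows and the layering step is routine. The delicate points are purely bookkeeping—ensuring the reconnecting shortest paths stay inside $G_S$ so that no label outside $S$ is introduced, and handling walks of varying length uniformly through the sink $t^{*}$.
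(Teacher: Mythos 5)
The paper does not include its own proof of this proposition---it is imported from~\cite{RV24} without argument---so there is nothing internal to compare against; judged on its own, your proof is correct. The checkpoint-compression lemma is the right idea and is carried out properly: restricting to the subgraph $G_S$ guarantees no spurious labels are introduced, retaining the at most $|S|\le|U|$ checkpoint edges guarantees all of $S$ is still collected, and the bound $(|U|+1)(|V|-1)+|U|\le(|U|+1)|V|$ checks out. The subsequent layering into $L+1$ levels with an empty-labeled sink $t^{*}$ is the standard unrolling and correctly yields an equivalent DAG of size polynomial in $|U|$ and $|G|$, constructible in polynomial time.
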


Based on this proposition, we will henceforth assume that a transition graph is always a DAG.
The \emph{disjunctive complexity} $\Tg(f)$, where $f\colon \{0,1\}^U\to \{0,1\}$, is the
smallest size of a~transition graph (DAG) representing $f$.
It is easy to see that
\begin{equation}\label{uniform-upbnd}
  \Tg(f)\leq |f^{-1}(1)|.
\end{equation}
This bound is attained by constructing a transition graph with edges $ e_x = (s, t) $ for each $ x $ such that $ f(x) = 1 $, where each edge $ e_x $ is marked by the unit set $ U_x $.

\section{Disjunctive Complexity and Streaming Algorithms}\label{sec:streaming}

In this section, we introduce several classes of labeling functions such that the corresponding transition graphs simulate computations performed by streaming algorithms. These classes give rise to restricted variants of disjunctive complexity.

In a streaming algorithm, input data is accessed in small portions that arrive sequentially, forming a ``stream''. The algorithm uses limited memory to compute a function of the input data. Boolean functions can be computed by such algorithms, where the most natural choice for a portion of input is a single bit.
We are interested in the space complexity of streaming algorithms. Therefore we represent a (non-deterministic) one-pass streaming algorithm $\mathcal{A}$ computing a Boolean function $f(x_1, \dots, x_n)$ in memory  $S$ by a transition relation $\delta\subseteq Q\times Q\times \{0,1\}$, where  $Q$ is the state set,  $|Q| = 2^S$. The algorithm starts a run from the initial state $q_0$. A valid run on the input
$x_1,\ldots,x_n$, $x_i\in\{0,1\}$ is a sequence of states $q_0, q_1, \ldots, q_n$, such that $(q_i, q_{i+1}, x_{i+1})\in \delta$ for all $0\leq i <n$. The value $f_{\A}(x_1,\dots, x_n)$ of the function computed by $\A$ is~$1$ if there exists a valid run $q_0, q_1, \ldots, q_n$ on the input $x_1,\ldots,x_n$ such that $q_n\in Q_a$, where $Q_a$ is the set of accepting states.
In this form, streaming algorithms resemble automata with $2^S$ states. However, for streaming algorithms,  the memory size may depend on $n$.

We have described the simplest model of streaming algorithms. In more general models, multiple passes through the input data are allowed. In the most general case, the variable to be read at a particular moment is determined by the current state of the algorithm. Such algorithms
correspond precisely to general transition graphs. 
Formally, a \emph{generalized streaming algorithm} $\A$  is determined by the following data:  the state set $Q$, 
the initial state is $s$, the halting state is $t$, the set of Boolean variables $U$; and the transition relation $\delta\subseteq Q\times Q\times\big(\{\emptyset\}\cup U\big)$.

The algorithm $\A$ operates on a Boolean vector from $\{0,1\}^U$, which is initially zero. Configurations of $\A$ are pairs $(q,x)$, where $q\in Q$, $x\in \{0,1\}^U$. The initial configuration is $(s,\vec{0})$; $\vec{0} = (0,\dots,0)$. 
A transition can change a configuration $(q, x)$ to a~configuration $(q', x')$ as follows. A transition $(q,q',\emptyset)\in \delta$ preserves $x$ (i.e., $x'=x$); a transition $(q, q', i)\in \delta$ sets $i$-th component to $1$: i.e., $x'_i =1$ while $x'_j = x_j$ for $j\ne i$. 
In a halting configuration $(t, x)$, the algorithm stops, and we say that the resulting vector $x$ is \emph{generated by} $\A$. The Boolean function $\F{\A}$ computed by $\A$ takes 
the value $1$ at all vectors generated by the algorithm and $0$ otherwise.

The transition graph of the algorithm $\A$ is $G = (V,s,t,E,\ell)$, where $V= Q$, $E$ consists of edges $e_{q',q'', i}$ such that $(q',q'', i)\in\delta$, and $\ell(e_{q',q'', i})$ is $i$. It follows immediately from this construction that that $\F{\A} = \F{G}$. 

Restricted types of streaming algorithms are represented by transition graphs with restricted labeling functions. We consider three types of restricted disjunctive complexity:

A transition graph is  \emph{streaming}  if  the variables in $U$
are linearly ordered (without loss of genarality (w.l.o.g.) we assume that
$U = [n] = \{1,2,\dots, n\}$)\footnote{Hereinafter we adopt notation $[n] = \{1,2,\dots, n\}$.}, edge labels  are singletons or the empty label, and along any $(s,t)$-path non-empty labels form an increasing sequence. For $f\colon \{0,1\}^{[n]} \to \{0,1\}$  the \emph{streaming disjunctive complexity} $\Ts(f)$ is the smallest size of a streaming transition graph  representing $f$.

A transition graph is    \emph{adaptively streaming} if there exists a linear order on the Boolean variables such that the graph  is  streaming with respect to this order.  For $f\colon \{0,1\}^U \to \{0,1\}$ the \emph{adaptive streaming disjunctive complexity} $\Tas(f)$ is the smallest size of an adaptively streaming transition graph  representing $f$.

A transition graph is \emph{write-once} if, along any $(s,t)$-path, each variable appears at most once. For $f\colon \{0,1\}^U \to \{0,1\}$ the \emph{write-once disjunctive complexity} $\Two(f)$ is the smallest size of a write-once transition graph  representing $f$.

It is clear from the definitions that $\Ts(f)\geq \Tas(f)\geq \Two(f)\geq \Tg(f)$.
We present below  an exponential separation between $\Tas$ and $\Ts$. We are able to prove a superpolynomial separation between $\Two$ and $\Tg$ under the conjecture $\NP\nsubseteq \P/\poly$. The  value of a gap between  $\Two(f)$ and $\Tas(f)$   remains open.

The separation $\Tas$ from $\Ts$ uses the well-known idea of deriving lower bounds for streaming space complexity by applying lower bounds from communication complexity. We adapt this idea to the case of disjunctive complexities.

\subsection{Streaming Disjunctive Complexity}

The conversion of a one-pass streaming algorithm to a streaming transition graph is straightforward.

\begin{prop}\label{stream-alg2stream-graph} 
If there exists a streaming algorithm computing $f\colon \{0,1\}^n\to\{0,1\}$ in memory $m(n)$, then $\Ts(f) = O(n2^{m(n)})$.
\end{prop}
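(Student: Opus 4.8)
The plan is to build a layered DAG that mimics the configuration space of $\A$, using the position index $j$ to number the layers and encoding the reading of a $1$ in position $j$ by the singleton label $\{j\}$ and the reading of a $0$ by the empty label. Concretely, I would take $n+1$ copies of the state set $Q$, writing $(q,j)$ for the copy of $q\in Q$ in layer $j$ for $0\le j\le n$, declare the source $s=(q_0,0)$, add a fresh terminal vertex $t$, and introduce, for every transition $(q,q',b)\in\delta$ and every $j$ with $0\le j<n$, an edge from $(q,j)$ to $(q',j+1)$ labeled by $\{j+1\}$ if $b=1$ and by the empty set if $b=0$. Finally I would join each accepting copy $(q,n)$, $q\in Q_a$, to $t$ by an empty-labeled edge.

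The streaming property is then immediate. The graph is a DAG since every non-terminal edge strictly increases the layer index and all remaining edges lead into $t$; its labels are singletons or empty by construction. Along any $(s,t)$-path the layer index runs through $0,1,\dots,n$, and the edge leaving layer $j$ carries either $\emptyset$ or the singleton $\{j+1\}$, so the non-empty labels encountered form a strictly increasing subsequence of $1,2,\dots,n$. Hence the graph is streaming with respect to the natural order on $[n]$.

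For correctness I would exhibit the correspondence between $(s,t)$-paths and valid runs of $\A$. An $(s,t)$-path visits a sequence $(q_0,0),(q_1,1),\dots,(q_n,n),t$, and the choice of edge out of layer $j$ records both the next state $q_{j+1}$ and a bit $b_{j+1}\in\{0,1\}$; these are exactly the data of a valid run $q_0,\dots,q_n$ of $\A$ together with a guessed input $b=b_1\cdots b_n$, while reaching $t$ forces $q_n\in Q_a$. The label set of such a path is precisely $\{\,j : b_j=1\,\}$, so $\F{G}(x)=1$ holds iff some such path satisfies $\{\,j : b_j=1\,\}=U_x$, i.e.\ iff $\A$ has a valid accepting run on the specific input $b=x$. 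This is exactly the condition $\F{\A}(x)=1$, and since $\A$ computes $f$ we conclude $\F{G}=f$.

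The size count is routine: there are at most $n|\delta|$ layer edges plus at most $|Q|$ terminal edges, and with $|Q|=2^{m(n)}$ and a transition relation linear in the number of states this yields $O(n\,2^{m(n)})$ edges, giving the claimed bound on $\Ts(f)$. The step I expect to require the most care is the exact-match direction of the correctness argument: because the transition-graph semantics demands the equality $U_x=\ell(P)$, I must confirm that a path reading a $1$ at some position outside $U_x$, or a $0$ at a position inside $U_x$, produces a label set different from $U_x$ and so cannot cause a spurious acceptance. It is precisely here that the empty-labeled zero-reading edges, which contribute nothing to $\ell(P)$, must be reconciled with the streaming algorithm's reading of every input bit.
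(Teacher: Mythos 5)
Your construction is exactly the paper's: the same layered graph with vertices $(q,i)$ for $0\le i\le n$, edges $\bigl((q,i-1),(q',i)\bigr)$ labeled $\{i\}$ when the transition reads a $1$ and $\emptyset$ when it reads a $0$, and a terminal vertex reached from the accepting states in layer $n$. The correctness argument (label set of a path equals the unit set of the guessed input, so $\ell(P)=U_x$ forces the run to be on $x$ itself) and the size count coincide with the paper's proof, so your proposal is correct and essentially identical.
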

\begin{proof}
  The operation of a streaming algorithm computing a Boolean function $f$ can be represented by a streaming transition graph $G = (V, s, t, E, \ell)$ as follows. Let $Q$ be the state set of the algorithm. The vertex set $V$ consists of pairs $(q, i)$, where $q \in Q$ and $0 \leq i \leq n$, along with the terminal vertex $t$. The initial vertex is $s = (q_0, 0)$, where $q_0$ is the initial state of the algorithm.
For each triple $(q',q'', x)$ in the transition relation of the algorithm, there are edges in $E(G)$ in the form  $e_{q',q'',i} =\big((q', i-1), (q'',i)\big)$ for all $1 \leq i \leq n$. The label of $e_{q',q'',i}$ is $\{i\}$ if $x=1$, otherwise the label is the empty set.

 This construction ensures that the label set of an $(s,t)$-path in $G$ is exactly the unit set of a vector $x$ such that $f(x)=1$. Conversely, if $f(x)=1 $, the streaming computation on the input $x$ produces an $(s, (q,n))$-path $P$, where $q$ is an accepting state. This path can be extended to the $(s,t)$-path $P, ((q,n),t)$ and  the label set of this path is exactly the unit set of $x$. Thus, $f = \F{G}$ and $|G| = (n+1) |Q| =O(n2^{m(n)}) $. 
\end{proof}

Using streaming algorithms and this proposition, we can immediately construct examples of functions with low disjunctive complexity.

\begin{example}\label{upbndXOR}
  The simplest example is the parity function  $\XOR_n(x_1,\dots, x_n)$. It takes the value $1$ if the number of variables set to $1$ is odd. To compute $\XOR_n$ by a streaming algorithm, only $1$ bit of memory is needed to track the parity of 1s encountered so far.  Thus $\Ts(\XOR_n) = O(n)$ due to Proposition~\ref{stream-alg2stream-graph}. 
\end{example}

\begin{example}
  The threshold function $T_{n,k}(x_1,\dots, x_n)$ is equal to $1$ if at least $k$ variables are set to $1$.
  The well-known majority function is  a special case of the threshold function, namely, $T_{n, \lfloor n/2 \rfloor + 1}$. The slice function $S_{n,k}(x_1,\dots, x_n)$ is equal to $1$ if exactly $k$ variables  are set to~$1$.
  Streaming algorithms for computing $T_{n,k}$ and $S_{n,k}$ store the count of 1s encountered so far and make a decision at the end of the stream.
  For this purpose they use a counter of encountered 1s of size $\lceil\log_2n\rceil$. So, by Proposition~\ref{stream-alg2stream-graph},  $\Ts(T_{n,k}) = O(n^2)$ and $\Ts(S_{n,k}) = O(n^2)$.
\end{example}

\begin{example}
  More generally, counting the number of $1$s suffices to compute any symmetric function. Thus, for any symmetric function $f\colon \{0,1\}^n \to \{0,1\}$, we have $\Ts(f) = O(n^2)$.
\end{example}

For a  transition graph $G= (V,s,t,E,\ell)$, we define the counting function $\#_1\colon V\to \ZZ$ as $\#_1(v) = \max(|\ell(P)|: P\ \text{is an $(s,v)$-path})$. Note that $\#_1(u)\leq \#_1(v)$ for any edge $e=(u,v)\in E$.

\begin{theorem}\label{stream<a.stream}
  There exists a sequence of Boolean functions $f_n\colon \{0,1\}^{4n}\to\{0,1\}$ such that $\Tas(f_n) = \poly(n)$ and $\Ts(f_n) = 2^{\Omega(n)}$.
\end{theorem}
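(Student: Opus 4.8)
The plan is to take $f_n$ to be the equality test between two blocks of $2n$ bits. Concretely, index the $4n$ variables so that $x_1,\dots,x_{2n}$ form a string $a$ and $x_{2n+1},\dots,x_{4n}$ form a string $b$, and set $f_n(x)=1$ iff $a=b$, i.e. iff $x_i = x_{2n+i}$ for every $i\in[2n]$. Equivalently, the accepting unit sets are exactly the \emph{pairing-consistent} sets $S\subseteq[4n]$ satisfying $i\in S \Leftrightarrow 2n+i\in S$ for all $i\in[2n]$. The two halves of the argument are: an $O(n)$ upper bound on $\Tas(f_n)$ obtained from an order that interleaves the two blocks, and a $2^{\Omega(n)}$ lower bound on $\Ts(f_n)$ obtained by cutting the fixed natural order between positions $2n$ and $2n+1$ and reducing to the nondeterministic communication complexity of equality.

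For the upper bound I would fix the linear order $x_1 < x_{2n+1} < x_2 < x_{2n+2} < \dots < x_{2n} < x_{4n}$, which places each variable $x_i$ immediately before its partner $x_{2n+i}$. Under this order I build a ``ladder'' streaming graph with hubs $h_0 = s, h_1, \dots, h_{2n} = t$: for each $i$ I add two parallel segments from $h_{i-1}$ to $h_i$, one consisting of two edges labelled $\{i\}$ then $\{2n+i\}$ (mark both partners), the other a single $\emptyset$-edge (mark neither). The non-empty labels along any $(s,t)$-path increase in the chosen order, so the graph is streaming for that order, and the label sets realizable by $(s,t)$-paths are precisely the pairing-consistent sets. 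Hence $\F{G}=f_n$ and $|G|=O(n)$, which gives $\Tas(f_n)=\poly(n)$.

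For the lower bound, let $G$ be any streaming graph with respect to the fixed order $1<2<\dots<4n$ representing $f_n$, and cut at $2n$, writing $L=[2n]$ and $R=\{2n+1,\dots,4n\}$. The key structural fact is that, because non-empty labels increase along every $(s,t)$-path, each accepting path crosses the cut exactly once: it first uses only labels in $L$ and then only labels in $R$. Let $V_{\mathrm{mid}}$ be the set of vertices $v$ admitting both an $(s,v)$-path labelled inside $L$ and a $(v,t)$-path labelled inside $R$, and for such $v$ collect the families $A_v\subseteq 2^L$ and $B_v\subseteq 2^R$ of realizable label sets of these two path-halves. Concatenating a left half with a right half yields a valid increasing $(s,t)$-path, so each rectangle $A_v\times B_v$ consists only of accepted inputs, and by the crossing property every accepted input lies in some $A_v\times B_v$. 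Thus the $1$-entries of the communication matrix of $\Eq$ on $2n$-bit strings (Alice holding $a$, Bob holding $b$) are covered by at most $|V_{\mathrm{mid}}|$ all-$1$ combinatorial rectangles.

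It then remains to invoke the classical fooling-set bound: the $2^{2n}$ diagonal pairs $(a,a)$ form a fooling set for $\Eq$, since any all-$1$ rectangle containing $(a,a)$ and $(a',a')$ with $a\ne a'$ would also contain the $0$-entry $(a,a')$. Hence every $1$-cover of the matrix uses at least $2^{2n}$ rectangles, so $|V_{\mathrm{mid}}|\ge 2^{2n}$, and charging each vertex of $V_{\mathrm{mid}}$ to a distinct incoming edge gives $|G|\ge 2^{2n}-1 = 2^{\Omega(n)}$. The crux of the argument, and the precise reason the separation is possible, is this one-crossing property: the fixed order forces any streaming path to read all of $a$ before any of $b$, manufacturing a communication bottleneck, whereas the adaptive interleaving order destroys exactly this bottleneck and lets partners be matched locally with constant memory. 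The routine parts are checking that the ladder graph realizes no spurious label sets and that $V_{\mathrm{mid}}$ is well defined despite interspersed $\emptyset$-edges; the only genuinely delicate point is making the cut argument exact, which the increasing-label constraint handles cleanly.
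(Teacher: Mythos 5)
Your proof is correct and follows essentially the same route as the paper: an interleaved variable order gives a small adaptively streaming graph for an equality-type function, and cutting the fixed natural order between the two halves reduces the streaming lower bound to the nondeterministic communication complexity of equality. The only difference is that the paper additionally restricts to inputs whose first half has Hamming weight $n$ and locates the cut vertex via the counting function $\#_1$, obtaining $\binom{2n}{n}$ distinct vertices, whereas your rectangle-cover/fooling-set formulation dispenses with the weight restriction and yields the slightly stronger bound $2^{2n}-1$.
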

\begin{proof}
  The functions $\Eq_n\colon \{0,1\}^{4n} \to \{0,1\}$ are defined as follows. The variable set is $[2n] \times [2]$. We assume that the input variables are ordered as: $(1,1)<(2,1)<\dots <(2n,1)<(1,2)<(2,2)<\dots< (2n,2)$.
  A binary word $x$ consists of the values of variables $(i,1)$, and a binary word $y$ consists of the values of variables $(i,2)$. By definition, $\Eq_n(x, y) = 1$ iff the Hamming weight of $x$ is $n$ and $x = y$ (i.e., $x_i = y_i$ for all $1 \leq i \leq 2n$).

    First, we prove that $\Tas(\Eq_n) = \poly(n)$. Consider the alternating order on the variables: $(1,1)<(1,2)< (2,1)<(2,2) < \dots< (2n,1)< (2n,2)$.   A~streaming algorithm computing $\Eq_n$ for this order is straightforward and uses $O(\log n)$ memory. The algorithm checks the conditions $x_i = y_i$ for all $1 \leq i \leq 2n$, counts the number of $1$s in $x$, and accepts if and only if all checks are successful and the total number of $1$s in $x$ is $n$. This algorithm can be converted into a~streaming transition graph representing $\Eq_n$. The size of the graph is $O(n^2)$ due to Proposition~\ref{stream-alg2stream-graph}.

  Next, we prove that $\Ts(\Eq_n) = 2^{\Omega(n)}$.  Let $G= (V,s,t,E,\ell)$ be a streaming transition graph representing $\Eq_n$.   Take an $(s,t)$-path $P$ in $G$ going through  a vertex $v\in V$ such that $\#_1(v)=n$. Divide the path into two subpaths $P'P''$, where $P' $ is an $(s,v)$-path and $P''$ is a $(v,t)$-path. From the definition of $\Eq_n$ we conclude  that $\ell(P')\subset \{(i,1): 1\leq i\leq 2n\}$ and $(i,2)\in \ell(P'')$ iff $(i,1)\in \ell(P')$.  Thus, $\ell(\widetilde P)= \ell(P')$ for any $(s,v)$-path  $\widetilde P$. Since $G$ represents $\Eq_n$, there are $\binom{2n}n$ different values of $\ell(P')$, each corresponding to a different vertex in the graph. Therefore, $|G|\geq \binom{2n}n = 2^{\Omega(n)}$.
\end{proof}

\subsection{Write-once Disjunctive Complexity}

\begin{lemma}\label{WOinP}
  There exists a polynomial time algorithm that, given a transition   graph $G$ and an assignment $x\colon U\to \{0,1\}$, decides whether $G$ is a write-once transition graph and, if so, computes $\F{G}(x)$.  
\end{lemma}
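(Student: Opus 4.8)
The plan is to treat the two tasks---deciding write\nobreakdash-onceness and, conditionally, evaluating $\F{G}(x)$---separately, reducing each to reachability and longest\nobreakdash-path computations in the DAG $G$ (acyclic by our standing convention). First I would apply the preliminary cleanup from Section~\ref{sec:defs}, deleting every vertex that is unreachable from $s$ or from which $t$ is unreachable. This is polynomial and alters neither the collection of $(s,t)$-paths nor, hence, the value of $\F{G}$ or the write\nobreakdash-once status. After it, every vertex lies on some $(s,t)$-path, so for each edge $e=(u,v)$ there is automatically a directed path from $s$ to $u$ and from $v$ to $t$.

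For the decision, set $E_i=\{e\in E: i\in\ell(e)\}$ for each $i\in U$. I claim that $G$ fails to be write\nobreakdash-once precisely when some variable $i$ labels two distinct edges $e=(u,v)$ and $e'=(u',v')$ of $E_i$ for which there is a (possibly trivial) directed path from $v$ to $u'$. Indeed, given such a pair, the walk obtained by concatenating a path from $s$ to $u$, the edge $e$, a path from $v$ to $u'$, the edge $e'$, and a path from $v'$ to $t$ traverses its vertices in strictly increasing topological order, hence is a genuine simple $(s,t)$-path carrying $i$ twice; conversely any violating path exhibits such a pair. I would therefore precompute the transitive closure of $G$ once and then test the reachability condition over all $i$ and all ordered pairs of edges in $E_i$; the whole procedure is polynomial, and $G$ is write\nobreakdash-once iff every test fails.

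Assume now that $G$ is write\nobreakdash-once and let $k=|U_x|$. I would form the subgraph $G_x$ by discarding every edge $e$ with $\ell(e)\not\subseteq U_x$, which is safe because any path using such an edge has $\ell(P)\not\subseteq U_x$ and so cannot equal $U_x$. The key consequence of write\nobreakdash-onceness is that along any $(s,t)$-path the edge labels are pairwise disjoint (a shared variable would occur twice); thus for a path $P$ in $G_x$ we have $\ell(P)\subseteq U_x$ and $|\ell(P)|=\sum_{e\in P}|\ell(e)|\le k$, with equality iff $\ell(P)=U_x$. Consequently $\F{G}(x)=1$ iff $G_x$ has an $(s,t)$-path of total weight $k$, where edge $e$ carries weight $|\ell(e)|$; since no weight exceeds $k$, this is exactly the test $\max=k$ over $(s,t)$-paths in $G_x$ (the maximum coinciding with the quantity $\#_1(t)$ computed inside $G_x$, and taken to be $-\infty$ when no $(s,t)$-path survives, which correctly covers the case $k=0$). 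This maximum is found by one dynamic\nobreakdash-programming sweep of $G_x$ in topological order, so this step is polynomial as well.

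The individually routine steps aside, the content of the argument lies in the reduction of the last paragraph: write\nobreakdash-onceness forces disjoint labels along every path, and this collapses the requirement that $P$ \emph{cover} all of $U_x$---a set\nobreakdash-cover\nobreakdash-flavoured condition with no evident polynomial algorithm for general transition graphs---into a plain longest\nobreakdash-path computation, since over\nobreakdash-counting is impossible and coverage is witnessed purely by the total weight reaching $k$. I expect this to be the only place requiring care; the write\nobreakdash-once test and the cleanup are standard reachability arguments.
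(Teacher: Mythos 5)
Your proposal is correct and follows essentially the same route as the paper: a reachability test over pairs of edges sharing a label to decide write-onceness, followed by deleting edges with $\ell(e)\nsubseteq U_x$ and computing the maximum label-weight of an $(s,t)$-path (the quantity $\#_1(t)$) by a topological-order dynamic program, using the fact that write-onceness makes labels along a path disjoint so that $|\ell(P)|=\sum_{e\in P}|\ell(e)|$. The only cosmetic difference is that the paper first normalizes labels to singletons, whereas you handle general set labels directly; your additional care about the $k=0$ case and about walks in a DAG being simple paths is welcome but not a departure from the paper's argument.
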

\begin{proof}
  We assume that the graph $G= (V,s,t,E,\ell)$ is DAG  and that the edge labels are either singletons or the empty label.  The assumption is valid because the transformation to this form can be performed in polynomial time, as explained in Section~\ref{sec:defs}.

  To verify that $G$ is write-once, the algorithm examines all pairs of edges $e_1$, $e_2$ such that $\ell(e_1)=\ell(e_2)\ne \es$. For each such pair, the algorithm checks  whether there exists no $(s,t)$-path that passes through both $e_1$ and $e_2$. This check is easily reduced to the reachability problem in a directed graph.

  Let $S = x^{-1}(1)$. The algorithm removes all edges $e$ in $G$ such that $\ell(e)\nsubseteq S$,  resulting in a modified graph $G' = (V, s, t, E', \ell)$.  It follows immediately from the definitions that $\F{G}(x) =1$ if and only if there exists an $(s,t)$-path in $G'$ labeled by $S$. In terms of the counting function of $G'$ it means that $\#_1(t) = |S|$. So, to compute $\F{G}(x)$, it suffices to compute $\#_1(t)$ in the graph $G'$.

  Since $G'$ remains write-once, the counting function satisfies a recurrence
  \begin{equation}\label{eq:cnt}
    \begin{aligned}
      &\#_1(s) = 0;\\
      &\#_1(v) = \max_{u: (u,v)\in E'}\big( \#_1(u)+|\ell(u,v)|\big).
    \end{aligned}
  \end{equation}
  To solve this recurrence, the algorithm first computes a topological sorting of the vertices in $G'$, which can be done in polynomial time. Then, it evaluates the recurrence using Eq.~\eqref{eq:cnt} in the order of the topological sorting, which also takes polynomial time.
\end{proof}

\begin{theorem}\label{write-once<general}
  Assume that $\NP\nsubseteq \P/\poly$. Then there exists a sequence of Boolean functions $f_n\colon \{0,1\}^{\poly(n)}\to\{0,1\}$ such that $\Tg(f_n) = \poly(n)$ and $\Two(f_n)$ grows faster than any polynomial in $n$.
\end{theorem}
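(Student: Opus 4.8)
The plan is to exhibit a single NP\nobreakdash-hard family $f_n$ whose \emph{general} disjunctive complexity is polynomial, and then to extract the lower bound on $\Two$ purely from Lemma~\ref{WOinP} and the hypothesis $\NP\nsubseteq\P/\poly$. The logical skeleton is short. If $\Two(f_n)$ were polynomially bounded, a minimal write-once graph for $f_n$ would be a polynomial-size advice string, and feeding it to the polynomial-time evaluator of Lemma~\ref{WOinP} would place $f_n$ in $\P/\poly$; since $f_n$ will be NP-hard, this forces $\NP\subseteq\P/\poly$, contradicting the assumption. So the whole difficulty is concentrated in one step: building an NP-hard $f_n$ with $\Tg(f_n)=\poly(n)$. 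This is consistent only because evaluating a general transition graph is itself an NP-type problem (one must guess an $(s,t)$-path whose label set equals $U_x$), so small $\Tg$ does \emph{not} by itself put a function in $\P/\poly$. This asymmetry with the write-once case is exactly what Lemma~\ref{WOinP} breaks.

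For the construction I would encode 3-SAT directly into the accepting set of a polynomial-size graph, letting the formula live in the input and the satisfying assignment be carried by the path. Fix $n$ Boolean variables and $m=\poly(n)$ clause slots. The index set $U$ consists of incidence variables $e^{+}_{i,j},e^{-}_{i,j}$ (meaning ``$v_i$, resp.\ $\bar v_i$, occurs in clause $j$'') together with clause markers $s_1,\dots,s_m$. A 3-CNF $\varphi$ is mapped to the input $x_\varphi$ whose unit set records the incidence of $\varphi$ and sets all markers $s_j$, and the target is $f_n(x_\varphi)=[\varphi\text{ is satisfiable}]$. The graph $G_n$ is a DAG layered by pairs $(i,j)$: for each variable $v_i$ it first makes a single nondeterministic branch fixing the truth value of $v_i$ via an empty-labelled edge, and then, for each clause $j$, a constant-size gadget offers optional edges writing any subset of $\{e^{+}_{i,j},e^{-}_{i,j}\}$, with the extra option of also writing $s_j$ whenever the locally chosen value of $v_i$ together with the incidence edge being written makes clause $j$ true. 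Because each variable is branched exactly once, every $(s,t)$-path encodes a consistent assignment for free; because many gadgets may write the same marker $s_j$, the marker is covered exactly when some chosen literal satisfies clause $j$. This reuse of a label $s_j$ along one path is precisely what a write-once graph is forbidden to do.

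It then remains to verify $\F{G_n}=f_n$, a routine check of the exact-match condition $U_x=\ell(P)$ in both directions. The incidence edges are optional and mutually independent, so a path can reproduce exactly the incidence part of any input; the bundled edge writing $e^{+}_{i,j}$ together with $s_j$ can be taken only when $e^{+}_{i,j}\in U_x$ (otherwise it would write a label outside $U_x$), and in the false branch $s_j$ is bundled only with $e^{-}_{i,j}$, so $s_j$ can enter $\ell(P)$ only when the clause is genuinely satisfied by the path's assignment. Hence $U_x=(\text{incidence})\cup\{s_1,\dots,s_m\}$ is an achievable label set iff some assignment satisfies every clause, i.e.\ iff $\varphi$ is satisfiable. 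Since $|G_n|=O(nm)=\poly(n)$ we obtain $\Tg(f_n)=\poly(n)$, and $\varphi\mapsto x_\varphi$ is a polynomial-time reduction from 3-SAT, so $f_n$ is NP-hard.

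Finally I would close by contradiction: if $\Two(f_n)$ did not grow faster than every polynomial, there would be a constant $c$ and infinitely many $n$ admitting a write-once graph of size $n^{c}$; using these as advice with Lemma~\ref{WOinP} puts 3-SAT into $\P/\poly$ on the corresponding lengths, and standard padding upgrades this to $\NP\subseteq\P/\poly$, contradicting the hypothesis. The main obstacle, and the only place real ingenuity is needed, is the gadget design of $G_n$: one must simultaneously keep the assignment consistent (achieved structurally, by branching each variable once) and record clause satisfaction (achieved by reusing the markers $s_j$, which is what makes the graph genuinely non-write-once), while arranging the exact-match semantics so that no required incidence label is dropped and no spurious marker is admitted. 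Getting these interactions right, rather than the surrounding complexity-theoretic bookkeeping, is the crux of the proof.
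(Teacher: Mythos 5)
Your proposal is correct and shares the paper's logical skeleton: if $\Two(f_n)$ were polynomially bounded, the minimal write-once graph would serve as polynomial-size advice, Lemma~\ref{WOinP} (converted to circuits) would evaluate it in polynomial time, and an NP-hard $f_n$ would then force $\NP\subseteq\P/\poly$. Where you diverge is in how the NP-hard family with small $\Tg$ is obtained. The paper simply cites the $\NP$-completeness of evaluating $\F{G}(x)$ and stops there; but the reduction it cites (Appendix~\ref{app:Npcomplete}) encodes the CNF \emph{in the graph} and fixes $x=1^m$, so it does not by itself exhibit a fixed sequence $f_n$ with $\Tg(f_n)=\poly(n)$ whose hardness lives in the input $x$. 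Your explicit construction --- a fixed layered graph $G_n$ over incidence variables $e^{\pm}_{i,j}$ and clause markers $s_j$, branching each variable once and bundling $s_j$ with an incidence label exactly when the locally chosen truth value satisfies the clause --- supplies precisely this missing object, and your correctness check (incidence labels are local to their gadgets, so the exact-match semantics forces the path to reproduce the input's incidence, while $s_j$ is reachable only via a genuinely satisfying literal and may be rewritten across gadgets, which is what defeats write-onceness) is sound. In short, you prove a slightly stronger and more self-contained statement than the paper's two-line argument.

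One small caveat: your closing step treats the case where $\Two(f_n)\le n^c$ only for infinitely many $n$ and appeals to ``standard padding'' to recover $\NP\subseteq\P/\poly$. Padding a length-$n_0$ instance up to a good length $n$ yields a circuit of size $\poly(n)$, not $\poly(n_0)$, so if the good lengths are very sparse this does not give $\P/\poly$; what you actually obtain is an infinitely-often inclusion. The clean fix is to read ``grows faster than any polynomial'' as ``is not bounded by any polynomial'' and argue the contrapositive for all $n$ --- which is exactly the level of precision the paper itself adopts, so this does not put you behind its proof.
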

\begin{proof}
  It was proved in~\cite{RV24} that computation $\F{G}(x)$ for general transition graphs is $\NP$-complete.\footnote{For reader's convenience we add the proof to Appendix~\ref{app:Npcomplete}.} If, for any general transition graph $G$, there exists an equivalent write-once transition graph of size $\poly(|G|)$, then there would exist Boolean circuits computing $\F{G}(x)$ with size polynomial in $|G|$. Here we apply Lemma~\ref{WOinP} and the well-known conversion of algorithms to Boolean circuits; see,  e.g.~\cite{AroraBarak}.
\end{proof}

\section{Disjunctive Complexity and Branching Programs}\label{sect:BP}

A (nondeterministic) branching program is a directed acyclic graph $G$ with edges labeled by \emph{literals} (variables and negations of variables). An assignment $x$ of variables switches on edges labeled by literals that are true under $x$. Switched on edges form an \emph{assignment graph} $G_x$. A Boolean function $f$ computed by a~branching program $G$ takes the value $1$ if   there exists an $(s,t)$-path in $G_x$.
The corresponding complexity measure $\BP(f)$ is the minimum size of a~branching program computing a Boolean function $f$. 
If all labels in a branching program are variables (i.e., no negations are used), then it is called \emph{monotone}. For a monotone Boolean function $f$, another complexity measure  $\mBP(f)$ is well-defined: it is the minimum size of a monotone branching program computing $f$. 

For more details on branching programs, see, e.g.,~\cite{Jukna}.

At first glance, branching programs look similar to transition graphs. However, these models are quite different.  For instance, transition graphs use  only variables (no negations) but can represent non-monotone functions (in fact, any function can be represented by a transition graph).  Moreover, a monotone branching program can be simulated by a transition graph with only a polynomial blow-up in size.

\begin{prop}
  Let $G$ be a monotone branching program computing a function~$f$. Then there exists a transition graph $G'$ such that $\F{G'}=f$ and $|G'|\leq |G|+O(n)$, where $n$ is the number of variables.
  Thus $\Delta(f)\leq \mBP(f)+O(n)$.
\end{prop}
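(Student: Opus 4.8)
The plan is to convert a monotone branching program $G$ directly into a transition graph $G'$ by reusing the underlying DAG structure and translating labels appropriately. Since $G$ is monotone, every edge of $G$ is labeled by a single variable $x_i$ (no negations), so I would simply reinterpret each such edge as a transition-graph edge with the singleton label $\{i\}$. The key point is that in a branching program an edge labeled $x_i$ is ``switched on'' precisely when $x_i=1$, so an $(s,t)$-path survives in the assignment graph $G_x$ exactly when all the variables labeling its edges are set to $1$ in $x$; this is the natural analogue of the label set $\ell(P)$ in a transition graph.

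The main subtlety, and the step I expect to be the main obstacle, is the mismatch between the two acceptance conditions. A monotone branching program accepts $x$ if there exists an $(s,t)$-path all of whose labels are satisfied, i.e. if the label set of some surviving path is \emph{contained in} $U_x$. By contrast, a transition graph accepts $x$ only if some $(s,t)$-path has label set \emph{exactly equal} to $U_x$. So the naive relabeling computes the monotonization-style ``subset'' condition rather than the ``equals'' condition, and the two functions will in general differ. To repair this, I would add machinery that lets the transition graph ``absorb'' the extra variables of $U_x$ that are not read along the chosen branching-program path. Concretely, I plan to augment $G'$ with a gadget on $n$ fresh vertices $w_0,w_1,\dots,w_n$ forming a path from the old terminal to the new terminal, where between $w_{i-1}$ and $w_i$ there are two parallel edges: one labeled $\{i\}$ and one labeled by the empty set. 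Routing through this gadget after reaching the branching program's terminal allows the path to optionally pick up each remaining variable $x_i=1$, so that the total label set can be made to equal $U_x$ whenever the monotone program accepts.

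With this gadget the equivalence $\F{G'}=f$ follows in both directions. If $f(x)=1$, take an accepting $(s,t)$-path $P$ in $G_x$; its labels form a subset $L\subseteq U_x$, and by monotonicity we may route through the gadget to add exactly the variables of $U_x\setminus L$, producing an $(s,t)$-path in $G'$ whose label set is precisely $U_x$, so $\F{G'}(x)=1$. Conversely, if $\F{G'}(x)=1$, some $(s,t)$-path in $G'$ has label set exactly $U_x$; its branching-program portion uses only variables in $U_x$, hence is a genuine accepting path of $G$ on $x$ (using that dropping the gadget-contributed labels only shrinks the set of required variables, and monotonicity guarantees these are all $1$), so $f(x)=1$.

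For the size bound, the branching-program portion of $G'$ has the same edge set as $G$, contributing $|G|$ edges, and the gadget contributes $2n = O(n)$ edges, giving $|G'|\leq |G|+O(n)$ as claimed. Since $G'$ is a valid transition graph representing $f$, the definition of disjunctive complexity immediately yields $\Delta(f)\leq |G'|\leq \mBP(f)+O(n)$, where we take $G$ to be a minimum monotone branching program for $f$.
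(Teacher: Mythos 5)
Your construction is essentially identical to the paper's: both append to the old terminal a path of $n$ fresh vertices with parallel edges labeled $\{i\}$ and $\emptyset$, so that the "subset" acceptance of the monotone branching program can be upgraded to the "exact equality" acceptance of a transition graph, and both directions of the equivalence are argued the same way. The proof is correct and matches the paper's approach.
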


\begin{proof}
  We assume that  variables of $f$ are indexed by $[n]$. Since there are no negations of variables, we use the same labeling function as before to describe $G$. The transition graph $G'$ is constructed from $G$ by adding new vertices $v_1, v_2,\dots, v_n$ and edges $e_{i,\al} = (v_{i-1}, v_i)$, where $1\leq i\leq n$, $\al\in\{0,1\}$, $\ell'(e_{i,0})=\es $, $\ell'(e_{i,1}) = \{i\}$. For uniformity, we set $t= v_0$, where $t$ is the terminal vertex in $G$. The initial vertex of $G'$ is $s'=s$, the terminal vertex is $t' = v_n$. The construction ensures that for each assignment $x\colon [n]\to \{0,1\}$ there exists a~$(v_0,v_n)$-path $P$ such that $\ell(P) = x^{-1}(1) $.  

  If $f(x)=1$, then there exist an $(s,t)$-path  in $G$ with edges labeled by $i$ such that $x_i=1$. The path  can be extended by a suitable $(v_0, v_n)$-path  to an $(s', t')$-path $P'$  such that $ \ell'(P') = x^{-1}(1)$. Therefore $\F{G'}(x) =1$.

 If $f(x)=0$, then for any an $(s,t)$-path $P$ in $G$, there exists an edge $e\in P$ such that $\ell(e)\notin x^{-1}(1)$. Thus, any  $(s',t')$-path in $G'$ contains an edge labeled by $i\notin x^{-1}(1)$. Therefore    $\F{G'}(x) =0$.  
\end{proof}

Given a (general) branching program $G$ computing a function $f$, the value $f(x)$ can be computed in polynomial time by solving a reachability problem in the assignment graph $G_x$. Since computing a value of a function represented by a transition graph is $\NP$-complete (see~\cite{RV24}), it leads to an analogue of Theorem~\ref{write-once<general}.

\begin{prop}\label{BP<general}
   Assume that $\NP\nsubseteq \P/\poly$. Then there exists a sequence of Boolean functions $f_n\colon \{0,1\}^{\poly(n)}\to\{0,1\}$ such that $\Tg(f_n) = \poly(n)$ and $\BP(f_n)$ grows faster than any polynomial in $n$.
\end{prop}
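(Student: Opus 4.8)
The plan is to mirror the structure of the proof of Theorem~\ref{write-once<general} exactly, since the two statements are of the same shape and the underlying obstruction is identical. The key asymmetry being exploited is a separation of the \emph{evaluation} complexity of the two models: evaluating $\F{G}(x)$ for a general transition graph is $\NP$-complete (by~\cite{RV24}), whereas evaluating a branching program on an input $x$ is polynomial-time, because one simply forms the assignment graph $G_x$ and tests reachability from $s$ to $t$. This is precisely the same kind of gap that was used for write-once graphs via Lemma~\ref{WOinP}, with the branching-program evaluation procedure playing the role that Lemma~\ref{WOinP} played there.

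\textbf{First step.} I would argue by contradiction. Suppose the conclusion fails; then for the specific family witnessing $\NP$-completeness of transition-graph evaluation there would be a polynomial-size translation into branching programs. More carefully, I would set up a fixed family of transition graphs $\{G_n\}$ arising from the $\NP$-completeness reduction of~\cite{RV24}, so that deciding $\F{G_n}(x)$ solves an $\NP$-complete problem. Let $f_n \bydef \F{G_n}$; by construction $\Tg(f_n) = |G_n| = \poly(n)$, giving the first required bound. The negation of the proposition supplies branching programs $B_n$ with $\F{B_n} = f_n$ and $|B_n| = \poly(n)$.

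\textbf{Second step.} I would then convert the polynomial-size branching programs into polynomial-size Boolean circuits deciding the $\NP$-complete language. Given $B_n$ and an input $x$, one evaluates $f_n(x)$ by the reachability procedure above, which runs in time polynomial in $|B_n|$ and the number of variables. Applying the standard conversion of polynomial-time algorithms into polynomial-size circuits (see~\cite{AroraBarak})—with the polynomial-size advice $B_n$ for each input length hardwired in—yields a family of polynomial-size circuits computing an $\NP$-complete problem. This places $\NP$ inside $\P/\poly$, contradicting the hypothesis $\NP\nsubseteq\P/\poly$. Hence no such polynomial bound on $\BP(f_n)$ can hold, and $\BP(f_n)$ must grow faster than any polynomial.

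\textbf{Expected main obstacle.} The reasoning is essentially routine given the earlier machinery, so the only delicate point is ensuring the translation is genuinely non-uniform-friendly: the branching programs $B_n$ are a family indexed by input length and serve as advice, so I must be careful that the reachability evaluation is uniform and polynomial once $B_n$ is provided, justifying the $\P/\poly$ membership rather than merely $\P$. This is exactly analogous to how Lemma~\ref{WOinP} supplied a uniform polynomial-time evaluator feeding into the circuit conversion in Theorem~\ref{write-once<general}; here the branching-program reachability algorithm plays that role, and the proof can be compressed to a single remark pointing to that parallel.
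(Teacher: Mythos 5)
Your proposal is correct and follows essentially the same route as the paper: the authors explicitly note that a branching program can be evaluated in polynomial time via reachability in the assignment graph, and then state that the proof "repeats the proof of Theorem~\ref{write-once<general} with appropriate modifications" --- i.e., exactly your contradiction argument combining the $\NP$-completeness of evaluating $\F{G}(x)$ with the circuit conversion using the hypothetical polynomial-size branching programs as advice. Nothing essential is missing.
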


The proof of this proposition repeats the proof of Theorem~\ref{write-once<general} with appropriate modifications.

In  the opposite direction, we can prove unconditional separation of branching and disjunctive complexities.
For this purpose, we introduce Boolean functions with variables indexed by unordered pairs of integers in $[n] = \{1, \dots, n\}$.  An assignment  $x\colon\binom{[n]}2\to\{0,1\} $ in this case is naturally interpreted as an undirected graph $\Gamma_x$ on the vertex set $[n]$: edges of $\Gamma_x$ are pairs $\{i,j\}$ such that $x_{ij} =1$. To distinguish these graphs from transition graphs we call them \emph{label graphs}. A~Boolean function $f\colon\{0,1\}^{\binom{[n]}2}\to \{0,1\}$ is interpreted as a~property of graphs on $n$ vertices.

We now define functions that separate branching and disjunctive complexities. Let $ \PT_n$ be the indicator function of $P_3$-free graphs.  That is, $U =\binom{[n]}2 $ and $\PT_n(x)=1$ if $\Gamma_x$ is
a $P_3$-free graph, i.e. $\Gamma_x$ does not contain an induced  path on 3 vertices.
It is well-known that $P_3$-free graphs are disjoint unions of complete graphs.

\begin{prop}
  $\BP(\PT_n) = O(n^3)$.
\end{prop}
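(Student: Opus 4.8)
The plan is to construct a branching program directly from the structural characterization of $P_3$-free graphs as disjoint unions of cliques. A graph $\Gamma_x$ is $P_3$-free if and only if its vertex set admits a partition into blocks such that two vertices are adjacent precisely when they lie in the same block. Equivalently, $\Gamma_x$ fails to be $P_3$-free exactly when there exist three vertices $i,j,k$ with $x_{ij}=1$, $x_{jk}=1$, but $x_{ik}=0$ (an induced path centered at $j$). Since a branching program may use negated literals, I would first build a small nondeterministic branching program for the \emph{complement} property ``$\Gamma_x$ contains an induced $P_3$,'' and then complement it.

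The key steps are as follows. First I would note that the existence of an induced $P_3$ is detected by guessing the three vertices $i,j,k$ and verifying the literals $x_{ij}$, $x_{jk}$, and $\overline{x_{ik}}$ along a path of three edges from $s$ to $t$; summing over all ordered choices of the center $j$ and its two neighbors gives $O(n^3)$ such paths, hence a nondeterministic branching program of size $O(n^3)$ computing the negation $\neg\PT_n$. The main obstacle is that nondeterministic branching programs are not generally closed under complementation without a blow-up, so I cannot simply negate the output of this program. Instead I would compute $\PT_n$ directly by a co-nondeterministic style argument made deterministic through the clique-partition structure: process the vertices $1,2,\dots,n$ in order, nondeterministically assigning each vertex to one of at most $n$ blocks, and check along the computation path that every pair $\{i,j\}$ satisfies $x_{ij}=1$ iff $i,j$ received the same block label.

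Concretely, the branching program has levels indexed by the vertices being assigned and by the block label guessed for each; a path corresponds to a candidate clique-partition, and the edges read the literals $x_{ij}$ (when two vertices share a block) or $\overline{x_{ij}}$ (when they are placed in different blocks) to confirm consistency with $\Gamma_x$. An $(s,t)$-path survives exactly when the guessed partition is a valid clique-partition of $\Gamma_x$, which happens iff $\Gamma_x$ is $P_3$-free. The number of states is polynomial: at most $n$ vertices, each carrying a block label from a set of size $O(n)$, and the verification of each of the $O(n^2)$ pairs contributes a bounded number of edges, so the total size is $O(n^3)$, as claimed.

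I expect the main subtlety to lie in arranging the consistency checks so that all $\binom{n}{2}$ pair constraints are verified along every path without exceeding polynomial size — one must read each variable $x_{ij}$ the appropriate number of times and ensure that a path both confirms intra-block adjacencies and rules out inter-block adjacencies. Once the level structure encoding ``current vertex and its assigned block, together with the record of earlier block assignments needed to test the remaining pairs'' is set up correctly, the size bound and correctness follow by a straightforward counting and case analysis.
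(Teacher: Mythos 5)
There is a genuine gap in the second half of your argument. Your fallback construction --- nondeterministically assigning each vertex a block label and checking along the path that $x_{ij}=1$ iff $i$ and $j$ share a block --- cannot be realized in polynomial size as described. To verify the pair $\{i,j\}$ when vertex $j$ is processed, the current state of the program must determine the block of every earlier vertex $i$, so the states at level $j$ must distinguish all partial clique-partitions of $\{1,\dots,j\}$; the number of these grows like the Bell numbers (or like $n^{n}$ if blocks are labelled), which is superpolynomial. Your closing sentence, asking for a level structure that records ``earlier block assignments needed to test the remaining pairs,'' is precisely the step that blows up: that record is the whole partition. So the claimed $O(n^3)$ bound does not follow from your construction.

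The fix is much simpler and is what the paper does, starting from the very characterization you state in your first paragraph. $\Gamma_x$ is $P_3$-free iff for every ordered triple $(p,q,r)$ the clause $\lnot x_{\{p,q\}}\lor\lnot x_{\{q,r\}}\lor x_{\{r,p\}}$ holds, i.e.\ $\PT_n$ is a CNF with $N=n(n-1)(n-2)$ clauses of width $3$. A CNF $\bigwedge_{i=1}^{N} C_i$ with width-$3$ clauses is computed by a series--parallel branching program on vertices $0,1,\dots,N$ with three parallel edges from $i-1$ to $i$ labelled by the literals of $C_i$: an $(s,t)$-path survives in $G_x$ iff every clause has a true literal. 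This uses negated literals (which branching programs allow) but involves no complementation of a nondeterministic program and no guessing of a partition, and its size is $3N=O(n^3)$. Note that your worry about closure under complementation is beside the point here: one never complements the program for $\lnot\PT_n$; one directly writes $\PT_n$ as a conjunction of constant-size disjunctions, and conjunction corresponds to series composition of branching programs.
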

\begin{proof}
  A graph is $P_3$-free if and only if, for any triple $v_1, v_2, v_3$ of its vertices, the induced subgraph does not contain exactly two edges. In other words, the existence of two edges between these vertices implies the existence of the third one. Since $(x\land y)\mathbin{\to} z$ is equivalent to $\lnot x \lor \lnot y \lor z$, this condition is easily implemented in a branching program. We construct a branching program $G = (V, s,t,E, \ell)$ as follows. The input variables are unordered pairs $\{i,j\}$, $1\leq i\ne j \leq n$.   Let $[n]_3$ be the set of ordered arrangements of triples of integers in the range $[n]$,
  and $\nu\colon [N]\to [n]_3$ be an enumeration of $[n]_3$, here $N=n(n-1)(n-2)$.
  Set $V= \{0\}\cup [N]$, $s = 0$, $t = N$. For each $1\leq i\leq N$ there are three edges $e_{i,\al} = (i-1,i)$, $\al\in\{1,2,3\}$.  Their labels are $\ell(e_{i,1}) =\lnot x_{\{p,q\}}$, $\ell(e_{i,2}) =\lnot x_{\{q,r\}}$, $ \ell(e_{i,3}) = x_{\{r,p\}}$, where $\nu(i) = (p,q,r)$. Here, we extend the labeling function to allow literals as labels.   An $(s,t)$-path in $G$ does exist if and only if
there exists $x$ such that $\lnot x_{\{p,q\}}\lor\lnot x_{\{q,r\}}\lor x_{\{r,p\}} $ is true for all $p,q,r$,
which is equivalent to  $\Gamma_x$ being $P_3$-free. Thus, $G$ computes $\PT_n$. 
\end{proof}

To prove a lower bound for $\Tg(\PT_n)$ we need to show that  a restricted class of labeling functions is equivalent to the general case.
Let $G=(V,s,t,E,\ell)$ be a transition graph. The \emph{closure} $\bl$ of the labeling function $\ell$ is defined as follows. For   $e= (u,v)\in E(G)$ the label set  $\bl(e)$ is the union of three sets: (1) the label set $\ell(e)$; (2) the set of    $a\in U$ such that  $a\in \ell(P)$ for any  $(s,u)$-path~$P$;
(3) the set of  $a\in U$ such that $a\in \ell(P)$ for any $(v,t)$-path~$P$.

\begin{lemma}\label{lm:equivalence-closure}
 The closure of the labeling function yields an equivalent transition graph: $\F{(V,s,t,E,\ell)} = \F{(V,s,t,E,\bl)} $.
\end{lemma}

\begin{proof}
  Let $P$ be an  $(s,t)$-path.  By definition, $\ell(e) \subseteq \bl(e)$ for any edge $e$, so  $ \ell (P)\subseteq \bl(P)$. Now we prove the opposite inclusion. Let $a\in \bl(e)$, $e = (u,v)\in P$. The edge $e$ divides $P$ in three parts: $P =P' e P'' $.

  If $a\in \ell(e)$, then $a\in \ell(P)$. If $a\in \bl(e)\sm\ell(e)$, then, by definition, at least one of the following holds:  $a\in \ell(P')\subseteq \ell(P)$ or  $a\in \ell(P'')\subseteq \ell(P)$. Thus $\bl(P)\subseteq \ell(P)$.

We have proved that  $\bl(P) =  \ell(P)$, which implies that the two labeling functions yield equivalent transition graphs.
\end{proof}

We say that the labeling function $\ell$ is closed if $\ell = \bl$. From Lemma~\ref{lm:equivalence-closure}, we conclude that closed labeling functions determine the disjunctive complexity. 

\begin{lemma}\label{lm:cl2=cl}
   The closure operator is idempotent: the closure of the closure of $\ell$ is the closure of $\ell$.
\end{lemma}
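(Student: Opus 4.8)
The plan is to show that applying the closure operator twice yields the same result as applying it once. Let me denote by $\bl$ the closure of $\ell$ and by $\overline{\bl}$ the closure of $\bl$. Since $\ell(e)\subseteq\bl(e)$ for every edge, the monotonicity of the three defining sets gives $\bl(e)\subseteq\overline{\bl}(e)$ immediately, so one inclusion is free. The real work is the reverse inclusion $\overline{\bl}(e)\subseteq\bl(e)$, and the key observation that makes everything collapse is a lemma established in the proof of Lemma~\ref{lm:equivalence-closure}: for \emph{every} $(s,t)$-path $P$ we have $\bl(P)=\ell(P)$. More generally, I expect the same reasoning localizes: for any $(s,u)$-path $P$ one has $\bl(P)=\ell(P)$, and likewise for any $(v,t)$-path.

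First I would make this localization precise. Fix an edge $e=(u,v)$ and consider the three sets whose union defines $\overline{\bl}(e)$: the set $\bl(e)$ itself; the set $A$ of all $a\in U$ lying in $\bl(P)$ for \emph{every} $(s,u)$-path $P$; and the set $B$ of all $a$ lying in $\bl(P)$ for every $(v,t)$-path $P$. The claim $\overline{\bl}(e)\subseteq\bl(e)$ amounts to showing $A\subseteq\bl(e)$ and $B\subseteq\bl(e)$. I would argue that for any $(s,u)$-path $P$, the equality $\bl(P)=\ell(P)$ holds by exactly the argument of Lemma~\ref{lm:equivalence-closure}, applied with the terminal vertex $u$ in place of $t$ (the proof there only uses that each edge of $P$ splits $P$ into a prefix and a suffix, which is unaffected by where the path ends). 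Consequently $A$ equals the set of $a$ belonging to $\ell(P)$ for every $(s,u)$-path $P$, which is precisely the second defining set of $\bl(e)$; hence $A\subseteq\bl(e)$. The symmetric argument gives $B\subseteq\bl(e)$.

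Combining $A\subseteq\bl(e)$ and $B\subseteq\bl(e)$ with the trivial $\bl(e)\subseteq\bl(e)$, we get $\overline{\bl}(e)=\bl(e)\cup A\cup B\subseteq\bl(e)$, and together with the free inclusion $\bl(e)\subseteq\overline{\bl}(e)$ this yields $\overline{\bl}(e)=\bl(e)$ for every edge $e$, i.e. $\overline{\bl}=\bl$, which is idempotence.

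The main obstacle, and the only point requiring care, is justifying that the equality $\bl(P)=\ell(P)$ from Lemma~\ref{lm:equivalence-closure} genuinely applies to prefix-paths $(s,u)$ and suffix-paths $(v,t)$ rather than only to full $(s,t)$-paths. The subtlety is that in the definition of $\bl$ the second and third defining sets quantify over \emph{all} $(s,u)$-paths and \emph{all} $(v,t)$-paths in the full graph $G$; so when I rerun the Lemma~\ref{lm:equivalence-closure} argument on a prefix-path $P$ ending at $u$, I must check that splitting an interior edge $e'=(u',v')$ of $P$ as $P=P_1e'P_2$ still lets me conclude $a\in\ell(P_1)$ or $a\in\ell(P_2)$ whenever $a\in\bl(e')\setminus\ell(e')$. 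This works because $P_1$ is an $(s,u')$-path and $P_2$ a $(v',u)$-path, and the defining conditions for $\bl(e')$ refer to $(s,u')$-paths and $(v',t)$-paths; for the suffix case one extends $P_2$ to a $(v',t)$-path, which exists since $G$ is assumed to have no dead vertices. I would spell out this extension step carefully, as it is where the no-dead-vertex normalization from Section~\ref{sec:defs} is silently used.
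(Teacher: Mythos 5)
There is a genuine gap. Your argument hinges on the localized claim that $\bl(P)=\ell(P)$ for every $(s,u)$-path $P$ (and symmetrically for $(v,t)$-paths), and that claim is false. Counterexample: take the two-edge graph $s\to u\to t$ with $\ell(s,u)=\es$ and $\ell(u,t)=\{a\}$. Condition (3) in the definition of the closure puts $a$ into $\bl(s,u)$, so the $(s,u)$-path $P$ consisting of the single edge $(s,u)$ has $\bl(P)=\{a\}\ne\es=\ell(P)$. The repair you propose for the suffix case --- extending $P_2$ to a $(v',t)$-path $P_2R$ --- does not close the hole: from $a\in\ell(P_2R)=\ell(P_2)\cup\ell(R)$ you cannot conclude $a\in\ell(P_2)$, since $a$ may occur only on the extension $R$ (exactly what happens in the counterexample). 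Consequently your identification of the set $A$ with the second defining set of $\bl(e)$ fails; the same objection applies to $B$.

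The inclusions $A\subseteq\bl(e)$ and $B\subseteq\bl(e)$ are nevertheless true, but they must be proved the way the paper does it: argue contrapositively and keep a single \emph{full} $(s,t)$-path as the witness. If $a\notin\bl(e)$ for $e=(u,v)$, then $a\notin\ell(e)$ and there exist an $(s,u)$-path $P'$ and a $(v,t)$-path $P''$ with $a\notin\ell(P')$ and $a\notin\ell(P'')$. For every edge $e'$ of the full path $P=P'eP''$, the split $P=P_1e'P_2$ has $a\notin\ell(P_1)$, $a\notin\ell(e')$, $a\notin\ell(P_2)$, where $P_1$ is an $(s,\cdot)$-path and $P_2$ is a $(\cdot,t)$-path; hence $a\notin\bl(e')$. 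In particular $a\notin\bl(P')$ and $a\notin\bl(P'')$, so $a$ is outside both $A$ and $B$ and therefore outside the closure of $\bl$ at $e$. The essential point your localization loses is that the suffix used to disqualify an edge $e'$ of $P'$ must run all the way through $e$ and $P''$ to $t$ with controlled labels, rather than being an arbitrary extension.
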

\begin{proof}
  Let   $a\notin\bl(e)$. It implies that  $a\notin \ell(e)$, and there exists a path $P =P' e P''  $ such that  $a\notin \ell(P')$ and $a\notin \ell(P'')$. So   $a\notin \bl(e')$ for any edge $e'$ of  $P$, since $e'$ divides  $P$ in three parts satisfying the same conditions. Therefore $a$ does not belong to the closure of $\bl$.
\end{proof}

To get the desired lower bound, we need to establish several properties of transition graphs representing functions whose accepting sets are subsets of the set of $P_3$-free graphs.
 Fix a transition graph $G =(V,s,t,E,\ell)$ such that   $U= \binom{[n]}2$ and  $\F{G} (x)= 1$ implies that $\Gamma_x$ is $P_3$-free. Due to Lemma~\ref{lm:equivalence-closure}, we assume w.l.o.g. that $\ell$ is closed.

\begin{prop}\label{l(e)-P3-free}
  For any $e\in E$  the label graph  $\ell(e)$ is $P_3$-free.
\end{prop}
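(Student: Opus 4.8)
The plan is to argue by contradiction, using the closedness of $\ell$ together with the structural fact that $P_3$-free graphs are exactly the disjoint unions of complete graphs. First I would record the standing assumptions: $G$ has no dead or unreachable vertices, so every edge lies on some $(s,t)$-path, and $\ell$ is closed, i.e.\ $\ell = \bl$. The target is to show that no edge label $\ell(e)$ contains an induced $P_3$.

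Suppose toward a contradiction that for some $e=(u,v)\in E$ the label graph $\ell(e)$ contains an induced $P_3$ on vertices $i,j,k\in[n]$: that is, $\{i,j\},\{j,k\}\in\ell(e)$ but $\{i,k\}\notin\ell(e)$. Since $\ell$ is closed, $\{i,k\}\notin\ell(e)=\bl(e)$, so by the definition of the closure the label $\{i,k\}$ is not forced on either side of $e$: there exist an $(s,u)$-path $P'$ with $\{i,k\}\notin\ell(P')$ and a $(v,t)$-path $P''$ with $\{i,k\}\notin\ell(P'')$. Concatenating, $P=P'eP''$ is an $(s,t)$-path, and $\{i,k\}\notin\ell(P')\cup\ell(e)\cup\ell(P'')=\ell(P)$.

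On the other hand, take the assignment $x$ with $U_x=\ell(P)$; then $P$ witnesses $\F{G}(x)=1$, so by hypothesis $\Gamma_x$ is $P_3$-free, hence a disjoint union of cliques. Because $\ell(e)\subseteq\ell(P)=U_x$, the edges $\{i,j\}$ and $\{j,k\}$ both belong to $\Gamma_x$; as $i,j,k$ therefore lie in a common clique, the edge $\{i,k\}$ must also belong to $\Gamma_x$, i.e.\ $\{i,k\}\in\ell(P)$. This contradicts the conclusion of the previous paragraph, and the contradiction proves that $\ell(e)$ is $P_3$-free.

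The one obstacle to avoid is the tempting but incorrect shortcut of claiming $\ell(e)$ is $P_3$-free simply because it is a subset of the $P_3$-free graph $\ell(P)$: this fails in general, since deleting edges from a disjoint union of cliques (e.g.\ a triangle) can create an induced $P_3$. The closedness hypothesis is exactly what repairs this gap. If the closing edge $\{i,k\}$ were genuinely absent from $\ell(e)$, then by closedness it could be avoided on some completion of $e$ to an $(s,t)$-path, yet the $P_3$-freeness of the function value on that completion forces $\{i,k\}$ to reappear — the engine of the contradiction. Thus the role of Lemma~\ref{lm:equivalence-closure} is essential, not cosmetic.
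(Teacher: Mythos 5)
Your proof is correct and follows essentially the same route as the paper's: assume an induced $P_3$ in $\ell(e)$, use closedness to find an $(s,t)$-path through $e$ avoiding the missing edge on both sides, and derive a contradiction with the $P_3$-freeness of accepted label graphs. The only cosmetic difference is that you unpack the closure definition directly where the paper cites Lemma~\ref{lm:cl2=cl}, and you phrase the final contradiction via the union-of-cliques characterization rather than pointing at the induced $P_3$ in $\ell(P)$ directly.
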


\begin{proof}
  Suppose for contradiction that $\{i,k\}\in \ell (e)$, $\{j,k\}\in \ell(e)$, and $\{i,j\}\notin \ell(e)$. Since $\ell$ is closed, due to Lemma~\ref{lm:cl2=cl}, there exists an $(s,t)$-path $P = P' e P''$ such that  $\{i,j\}\notin \ell(P')$ and  $\{i,j\}\notin \ell(P'')$. Thus the label graph
  $\ell(P)$ is not  $P_3$-free,
  contradicting the assumption made.
\end{proof}

\begin{prop}\label{P3-free-path}
  For any  $(s,x)$-path $P$, the label graph $\ell(P)$ is $P_3$-free. Moreover, for any  $(s,x)$-path  $P$ and any edge $(x,y)\in E$ the maximal cliques of $\ell(P)$ and $\ell(x,y)$  either do not intersect or are comparable under the inclusion relation.
\end{prop}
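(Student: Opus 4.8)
The plan is to prove the two assertions in turn. I will rely on the standing assumption that $\ell$ is closed, on Proposition~\ref{l(e)-P3-free} (each edge label $\ell(e)$ is $P_3$-free), and on the elementary fact that a graph is $P_3$-free if and only if it is a disjoint union of cliques; consequently its maximal cliques are exactly its connected components, and having both $\{a,c\}$ and $\{c,b\}$ as edges forces $\{a,b\}$ to be an edge.

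For the first assertion I would argue by contradiction. Suppose an $(s,x)$-path $P$ has a label graph $\ell(P)$ containing an induced $P_3$, say $\{i,k\},\{j,k\}\in\ell(P)$ but $\{i,j\}\notin\ell(P)$. The key step is to extend $P$ to an $(s,t)$-path. For an arbitrary $(x,t)$-path $Q$ (one exists since dead vertices have been removed), the concatenation $PQ$ is an $(s,t)$-path, so its label set is the unit set of an input accepted by $G$; by the standing assumption on $G$ the label graph $\ell(PQ)=\ell(P)\cup\ell(Q)$ is $P_3$-free. Since $\{i,k\},\{j,k\}$ already lie in $\ell(P)\subseteq\ell(PQ)$, $P_3$-freeness forces $\{i,j\}\in\ell(PQ)$, and as $\{i,j\}\notin\ell(P)$ we get $\{i,j\}\in\ell(Q)$. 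As $Q$ was arbitrary, $\{i,j\}$ lies in the label set of \emph{every} $(x,t)$-path. I would then invoke the closure: letting $e=(w,x)$ be the last edge of $P$ (the case $x=s$ is trivial, since then $\ell(P)=\es$), part~(3) of the definition of $\bl$ places $\{i,j\}\in\bl(e)$, and because $\ell$ is closed, $\{i,j\}\in\ell(e)\subseteq\ell(P)$, contradicting $\{i,j\}\notin\ell(P)$.

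For the second assertion I would consider the $(s,y)$-path $P\cdot(x,y)$ obtained by appending the edge $(x,y)$. By the first assertion its label graph $\ell(P)\cup\ell(x,y)$ is $P_3$-free. Both $\ell(P)$ (first assertion) and $\ell(x,y)$ (Proposition~\ref{l(e)-P3-free}) are $P_3$-free, hence disjoint unions of cliques; let $A$ be a maximal clique of $\ell(P)$ and $B$ a maximal clique of $\ell(x,y)$, viewed as vertex subsets of $[n]$. Assuming $A\cap B\ne\es$ while neither contains the other, I would pick $a\in A\sm B$, $b\in B\sm A$, $c\in A\cap B$, which are pairwise distinct. Then $\{a,c\}\in\ell(P)$ and $\{c,b\}\in\ell(x,y)$, so both edges lie in the $P_3$-free graph $\ell(P)\cup\ell(x,y)$, which forces $\{a,b\}$ into this union. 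If $\{a,b\}\in\ell(P)$, then $b$ is joined to $a\in A$ within $\ell(P)$, placing $b$ in the component $A$, contradicting $b\notin A$; symmetrically, $\{a,b\}\in\ell(x,y)$ forces $a\in B$, a contradiction. Hence $A\subseteq B$ or $B\subseteq A$.

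I expect the main obstacle to be the first assertion, specifically the passage from ``$\{i,j\}$ belongs to the label of every $(x,t)$-path'' to an actual contradiction: this is exactly where closedness of $\ell$ is indispensable, and one must take care to apply the closure condition~(3) to the correct (last) edge of $P$ with the right endpoint, while separating out the degenerate case $x=s$. Once the first assertion is in hand, the second reduces to a bookkeeping argument on connected components, its only substantive input being the $P_3$-freeness of $\ell(P)\cup\ell(x,y)$.
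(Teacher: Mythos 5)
Your proof is correct, but it is organized differently from the paper's. The paper proves both assertions simultaneously by induction on the length of $P$: the base case is Proposition~\ref{l(e)-P3-free}, and the induction step establishes the clique-comparability claim by taking $i\in C_1\sm C_2$, $j\in C_2\sm C_1$, $k\in C_1\cap C_2$, observing that $\{i,j\}\notin\ell(P)\cup\ell(x,y)$, and using closedness of $\ell(x,y)$ to produce a $(y,t)$-path $P'$ with $\{i,j\}\notin\ell(P')$, so that $P(x,y)P'$ is an $(s,t)$-path whose label graph contains $\{i,k\}$ and $\{j,k\}$ but not $\{i,j\}$. You instead prove the first assertion directly and non-inductively --- an induced $P_3$ in $\ell(P)$ would force $\{i,j\}$ into the label of \emph{every} $(x,t)$-continuation, hence via condition~(3) of the closure applied to the last edge of $P$ into $\ell(P)$ itself --- and then deduce the comparability claim from the first assertion applied to the $(s,y)$-path $P(x,y)$. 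The underlying mechanism is the same (closedness trades ``absent from an edge label'' against ``present in some continuation''), but you apply it at the last edge of $P$ in contrapositive form, whereas the paper applies it at $(x,y)$. Your decomposition also makes explicit a step the paper leaves implicit: that $\ell(P(x,y))$ is again $P_3$-free, which the induction needs to proceed and which in your version is just an instance of the already-proved first assertion rather than a consequence of the comparability of cliques.
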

\begin{proof}
The induction on the length of $P$. The base case follows from Proposition~\ref{l(e)-P3-free}.

  The induction step. Let $C_1$ be a maximal clique in  $\ell (P)$ and $C_2$ be a maximal clique in $\ell(x,y)$.
Suppose for contradiction that $i\in C_1\sm C_2$,  $j\in C_2\sm C_1$, and $k\in C_1\cap C_2$.
Then  $\{i,j\}\notin \ell (P)\cup \ell(x,y)$.  Thus, there exists a $(y,t)$-path $P'$ such that  $\{i,j\}\notin \ell (P')$ due to Lemma~\ref{lm:cl2=cl}.  Therefore, the label graph of the path $P (x,y) P'$ is not  $P_3$-free, since  $\{i,j\}\notin \ell(P (x,y) P')$, while $\{i,k\}\in \ell(P (x,y) P')$ and $\{j,k\}\in \ell(P (x,y) P')$. We come to a contradiction to the assumption made.
\end{proof}

\begin{theorem}\label{P3-free-lwrbnd}
  $\Tg(\PT_n) = 2^{\Omega(n)}$. 
\end{theorem}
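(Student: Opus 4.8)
The plan is to exhibit an explicit family of $2^{\Omega(n)}$ accepting inputs and to show that, in any transition graph $G=(V,s,t,E,\ell)$ representing $\PT_n$, these inputs force $2^{\Omega(n)}$ distinct vertices; since we may assume $G$ is reduced (every vertex lies on an $(s,t)$-path), this yields $|E|=2^{\Omega(n)}$. By Lemmas~\ref{lm:equivalence-closure} and~\ref{lm:cl2=cl} I may assume $\ell$ is closed, so that Propositions~\ref{l(e)-P3-free} and~\ref{P3-free-path} apply: along every prefix of an $(s,t)$-path the accumulated label graph is a disjoint union of cliques, and by the laminar (``moreover'') part of Proposition~\ref{P3-free-path} this clique structure can only grow as the path is extended. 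As the hard family I take the single cliques $\Gamma_S$ with edge set $\binom{S}{2}$, where $S\subseteq[n]$, $|S|=n/2$ and $1\in S$; there are $\binom{n-1}{n/2-1}=2^{\Omega(n)}$ of them, all $P_3$-free. For each $S$ I fix an $(s,t)$-path $P_S$ with $\ell(P_S)=\binom{S}{2}$.

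The invariant I track is the clique $C_1(P')$ containing vertex $1$ in the accumulated label graph of a prefix $P'$ of $P_S$. Since labels only accumulate, $C_1$ grows monotonically from $\{1\}$ to $S$ along $P_S$, with $C_1(P')\subseteq S$ at every prefix. Let $v_S$ be the first vertex of $P_S$ at which $|C_1|\ge n/4$, and write $C_S$ for the value of $C_1$ there, so $1\in C_S\subseteq S$ and $|C_S|\ge n/4$. Then the prefix of $P_S$ up to $v_S$ contains all edges $\{1,i\}$ with $i\in C_S$, while the suffix from $v_S$ to $t$ contains all edges $\{1,j\}$ with $j\in S\setminus C_S$. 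My goal is to show that the assignment $S\mapsto v_S$ is essentially injective.

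The separation is achieved by a splicing (fooling) argument. Suppose $v_S=v_{S'}=v$ for $S\ne S'$. Splicing the prefix of $P_S$ up to $v$ with the suffix of $P_{S'}$ from $v$ produces an $(s,t)$-path $H$, so its label $\ell(H)$ is again $P_3$-free. This label contains $\{1,i\}$ for every $i\in C_S\subseteq S$ (from the prefix) and $\{1,j\}$ for every $j\in S'\setminus C_{S'}$ (from the suffix), yet it contains no edge $\{i,j\}$ with $i\in S\setminus S'$ and $j\in S'\setminus S$, since such an edge lies neither in $\binom{S}{2}$ nor in $\binom{S'}{2}$. Hence, whenever one can pick $i\in C_S\setminus S'$ and $j\in(S'\setminus S)\setminus C_{S'}$, the triple $i,1,j$ induces a $P_3$ in $\ell(H)$, contradicting that $\ell(H)$ is $P_3$-free. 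Thus distinct members of the family cannot share their critical vertex, giving $|V|\ge 2^{\Omega(n)}$ and therefore $\Tg(\PT_n)=2^{\Omega(n)}$. (Alternatively, the whole argument can be recast through the separability property of~\cite{RV24}.)

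The hard part will be making the splice produce a forbidden $P_3$ in \emph{every} case, for two reasons. First, because closed labels need not be singletons, $|C_1|$ may jump; in the extreme an edge can be labelled by the entire clique $\binom{S}{2}$, so that $v_S$ already has $C_S=S$ and the spliced suffix is empty. Such ``big-clique edges'' must be treated separately: an edge whose label is a clique of size $\ge n/4$ can lie on $P_S$ only for those $S$ whose clique contains that label, and counting these dedicated edges yields the bound directly, while paths that grow $C_1$ gradually are handled by the splice above. Second, the side conditions $C_S\not\subseteq S'$ and $S'\setminus S\not\subseteq C_{S'}$ need not hold for every pair; the remedy is to run the splice in both directions $P_S/P_{S'}$ and $P_{S'}/P_S$, and, if necessary, to refine the family (e.g.\ restrict to cliques with pairwise large symmetric differences) so that for any two distinct members at least one splice exposes a $P_3$. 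Organizing this dichotomy between a few large-label edges and many gradual steps, uniformly over the whole family, is the crux of the proof.
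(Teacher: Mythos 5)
There is a genuine gap, and you flag it yourself: the injectivity of the map $S\mapsto v_S$ is never established. The splice of the $P_S$-prefix with the $P_{S'}$-suffix produces an induced $P_3$ only when you can find $i\in C_S\setminus S'$ and $j\in (S'\setminus S)\setminus C_{S'}$, and these side conditions can simply fail (e.g.\ $C_S\subseteq S\cap S'$, or $S'\setminus S\subseteq C_{S'}$, or the degenerate case where a single edge label already contains all of $\binom{S}{2}$ so that $C_S=S$ and the suffix contributes nothing). Running the splice in both directions and ``refining the family'' are proposals, not arguments; as written, two distinct sets $S\ne S'$ may well share their critical vertex without any contradiction being derived, so the lower bound on $|V|$ does not follow. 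Since you explicitly defer ``organizing this dichotomy'' to future work and call it the crux, the proof is incomplete at precisely the point where the work has to happen.

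What is frustrating is that the machinery you already invoke makes the theorem much easier than your route suggests. The paper's proof is a direct edge count: with $\ell$ closed, Proposition~\ref{P3-free-path} says that along any $(s,t)$-path the maximal cliques of the accumulated label graph and of each new edge label are laminar, which forces every maximal clique of $\ell(P)$ to occur as a clique inside the label $\ell(e)$ of a \emph{single} edge $e$ of $P$. For each $S\subseteq[n]$ the clique input $\binom{S}{2}$ is accepted, so $S$ must appear as a (necessarily maximal, since $\ell(e)\subseteq\binom{S}{2}$) clique of some edge label; a $P_3$-free label graph has at most $n$ maximal cliques, so $|E|\ge 2^n/n$. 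No critical-vertex map, no splicing, and no case analysis over large-label edges is needed --- the ``big-clique edges'' that break your argument are exactly the objects the paper counts. I would recommend abandoning the fooling-set scheme here (it is the right instinct for the streaming lower bound in Theorem~\ref{stream<a.stream}, where paths really can be cut at a common vertex, but for general transition graphs the label of one edge can swallow the whole clique) and writing up the counting argument instead.
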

\begin{proof}
  Let $\F{G} = \PT_n$. Without changing the number of edges in $G$, we assume that the labeling function is closed. From Proposition~\ref{P3-free-path}, we conclude that for any $(s,t)$-path $P$ every
maximal  clique
  in  $\ell(P)$ must be a clique in some $\ell(e)$ for an edge $e$ in the path.
  Since
  the number of maximal cliques in any label graph $\ell(e)$ is at most $n$ (since it is $P_3$-free),
  and there are $2^n$ possible cliques, the number of edges in $G$ must be at least $2^n / n$.
\end{proof}

\begin{remark}
  It is easy to verify that  that $\lnot \PT_n$ is represented by a transition graph of  size polynomial in $n$.  The corresponding generalized streaming algorithm nondetermistically guesses a triple $i,j,k$, writes $x_{ij} =1$, $x_{jk}=1$, and after that writes arbitrary values for the remaining edges in the label graph except $\{i,k\}$.
  Thus, we get an example of a function such that the gap between the disjunctive complexity of the function and  its negation is  (sub-)exponential. 
\end{remark}

\section{Uniformly Hard Functions}\label{sec:uniform-hard}

A Boolean function $f$ is \emph{uniformly hard} with respect to the disjunctive complexity if $\Tg(g)=|g^{-1}(1)|$ for any function $g$ such that $g^{-1}(1)\subseteq  f^{-1}(1)$ (denoted as $g\leq f$). It means that the disjunctive complexity of $f$ is largest possible (see the bound~\eqref{uniform-upbnd}) and, moreover, the same holds for all functions that are dominated by~$f$, i.e. $g(x)\leq f(x)$ for all $x$.

In~\cite{RV24} exponential lower bounds on the disjunctive complexity were proved for indicator functions of asymptotically good binary linear codes (AGLC functions). These functions are \emph{almost uniformly hard}. It means that $|g^{-1}(1)| = \poly(\Tg(g))$ for all $g\leq f$. For AGLC functions, the known estimates of the degree of a~polynomial in this bound depend on the relative code distance and are rather large. 

Here we present an example of uniformly hard functions.  A graph is called a \emph{clique graph} if it is a~disjoint union of a complete graph and a set of isolated vertices. The name reflects the fact that  a clique graph is determined by its unique non-trivial clique, except in the case of the empty graph. 
The function $\C_n$ is an indicator function of clique graphs, i.e.  $\C_n(x) =1$ if $\Gamma_x$ is a clique graph.
By definition, $|\C_n^{-1}(1)| =2^n$. A modification of the  proof of Theorem~\ref{P3-free-lwrbnd} gives  the matching lower bound.

\begin{theorem}\label{cliques}
  $\Tg(\C_n) = 2^{n}$.
\end{theorem}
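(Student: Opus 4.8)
The plan is to establish $\Tg(\C_n) = 2^n$ by combining the upper bound from~\eqref{uniform-upbnd} with a matching lower bound obtained by adapting the clique-counting argument of Theorem~\ref{P3-free-lwrbnd}. Since $|\C_n^{-1}(1)| = 2^n$, the upper bound $\Tg(\C_n)\leq 2^n$ is immediate from~\eqref{uniform-upbnd}. The substance is therefore the lower bound $\Tg(\C_n)\geq 2^n$, and the first thing I would do is fix a transition graph $G=(V,s,t,E,\ell)$ with $\F{G}=\C_n$ and, by Lemma~\ref{lm:equivalence-closure}, assume without changing $|E|$ that $\ell$ is closed.

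The key structural observation is that clique graphs are a subclass of $P_3$-free graphs (a complete graph plus isolated vertices is certainly $P_3$-free), so Propositions~\ref{l(e)-P3-free} and~\ref{P3-free-path} apply verbatim: every edge label $\ell(e)$ is $P_3$-free, every label graph $\ell(P)$ along an $(s,t)$-path is $P_3$-free, and along any $(s,x)$-path the maximal cliques of $\ell(P)$ are nested or disjoint with those of any outgoing edge. The crucial strengthening is that when $\F{G}=\C_n$ exactly, \emph{every} clique graph must be realized as $\ell(P)$ for some $(s,t)$-path. I would next argue, as in the proof of Theorem~\ref{P3-free-lwrbnd}, that any maximal clique appearing in $\ell(P)$ must already be a maximal clique of some single edge label $\ell(e)$ with $e\in P$. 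For a clique graph whose non-trivial clique is $C\subseteq[n]$ (with $|C|\geq 2$), the single maximal non-trivial clique of $\ell(P)$ must coincide with a maximal clique of one of the edges, so this clique is ``charged'' to an edge of $G$.

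The heart of the counting is then a tighter accounting than in Theorem~\ref{P3-free-lwrbnd}. There, each $P_3$-free label graph could have up to $n$ maximal cliques, giving a factor-$n$ loss and the bound $2^n/n$. For $\C_n$ the decisive improvement is that a clique graph has essentially a \emph{single} distinguished clique, so I expect each of the $2^n$ accepting inputs to force a distinct edge (or at least that the number of distinct non-trivial cliques each edge can contribute is tightly controlled), eliminating the $1/n$ factor and yielding $|E|\geq 2^n$ exactly. Concretely, I would set up an injection from the family of subsets $C\subseteq[n]$ (equivalently, from $\C_n^{-1}(1)$) into $E$: map each clique graph with clique $C$ to the edge $e\in E$ whose label $\ell(e)$ contains $C$ as a maximal clique on the realizing path, and then show distinct $C$'s are forced onto distinct edges. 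The main obstacle I anticipate is precisely proving this injectivity — ruling out that two different cliques $C\neq C'$ are charged to the same edge $e$. This requires using the closedness of $\ell$ together with the exactness of $\C_n$ (not just containment in $P_3$-free graphs) to show that an edge's label pins down at most one non-trivial maximal clique that can serve as the unique clique of an accepted clique graph; handling the small/degenerate cliques (the empty graph and singleton-edge cases) carefully will be the fiddly part that must be checked to recover the exact constant $2^n$ rather than $2^n/n$.
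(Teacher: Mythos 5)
Your overall plan (upper bound from~\eqref{uniform-upbnd}, lower bound by tightening the clique-counting argument) is the right one, but there is a genuine gap at exactly the point you flag yourself: you never prove that distinct cliques are charged to distinct edges, and the route you propose for closing it does not work as stated. Using Propositions~\ref{l(e)-P3-free} and~\ref{P3-free-path} ``verbatim'' only tells you that each edge label $\ell(e)$ is $P_3$-free, and a $P_3$-free graph can have up to $n$ maximal cliques; that is precisely why Theorem~\ref{P3-free-lwrbnd} loses the factor $n$ and only gives $2^n/n$. Nothing in those propositions prevents a single edge from carrying many maximal cliques, each serving as the unique clique of a different accepted clique graph, so your injection from subsets $C\subseteq[n]$ into $E$ is not established. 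You also point in a slightly wrong direction for the fix: the exactness $\F{G}=\C_n$ is not what pins down the edge labels; it is only needed to guarantee that all $2^n$ accepting inputs are realized by some path.

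The missing idea is to re-prove the two propositions with ``clique graph'' in place of ``$P_3$-free'', using only the containment $\F{G}\le \C_n$ together with closedness of $\ell$. Concretely: if $\ell(e)$ had two non-adjacent vertices $i,j$ both of positive degree, then by Lemma~\ref{lm:cl2=cl} there would be an $(s,t)$-path $P=P'eP''$ with $\{i,j\}\notin\ell(P')\cup\ell(P'')$, so $\ell(P)$ would be an accepted graph that is not a clique graph --- contradiction. Hence every $\ell(e)$ is a clique graph and has at most \emph{one} non-trivial clique. The same argument along a path shows the non-trivial cliques of $\ell(P)$ and of each outgoing edge are \emph{comparable} under inclusion (not merely disjoint-or-comparable as in the $P_3$-free case, since two disjoint non-trivial cliques already violate the clique-graph property). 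It follows that $\ell(P)$ coincides with $\ell(e)$ for a single edge $e\in P$, so distinct accepting inputs force edges with distinct labels, giving $|E|\ge|\C_n^{-1}(1)|=2^n$ with no degenerate-case bookkeeping needed. Without this strengthened structural claim your argument does not get past $2^n/n$.
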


Fix a transition graph $G =(V,s,t,E,\ell)$ such that   $U= \binom{[n]}2$ and  $\F{G}\le \C_n$. W.l.o.g. we assume that $\ell$ is closed. 

\begin{prop}\label{l(e)-clique}
  For any $e\in E$  the label graph  $\ell(e)$ is a clique graph.
\end{prop}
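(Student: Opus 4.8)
The plan is to mirror the proof of Proposition~\ref{l(e)-P3-free}, exploiting that $\ell$ is closed together with the idempotency of the closure operator (Lemma~\ref{lm:cl2=cl}). The one conceptual addition is the right combinatorial characterization of clique graphs: a graph on $[n]$ is a clique graph precisely when any two of its non-isolated vertices are adjacent (equivalently, the non-isolated vertices form a single clique). This condition is strictly stronger than being $P_3$-free, and it has the pleasant feature that a single ``obstruction'' — a pair of non-isolated vertices not joined by an edge — simultaneously rules out both induced copies of $P_3$ and two disjoint edges sitting in different cliques.

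Given this, I would argue by contradiction. Suppose $\ell(e)$ is not a clique graph. Then I can pick vertices $i,j\in[n]$ that are each incident to some edge of $\ell(e)$ but with $\{i,j\}\notin\ell(e)$. Because $\ell=\bl$ is closed, the relation $\{i,j\}\notin\ell(e)=\bl(e)$ lets me invoke Lemma~\ref{lm:cl2=cl} to produce an $(s,t)$-path $P=P'eP''$ with $\{i,j\}\notin\ell(P')$ and $\{i,j\}\notin\ell(P'')$, whence $\{i,j\}\notin\ell(P)$. The assignment $x$ with $U_x=\ell(P)$ satisfies $\F{G}(x)=1$ (witnessed by $P$), so $\F{G}\le\C_n$ forces $\Gamma_x$ to be a clique graph. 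But $i$ and $j$ remain incident to edges of $\ell(e)\subseteq\ell(P)$, hence are non-isolated in $\Gamma_x$; by the characterization above they must then lie in its unique clique and be adjacent, i.e.\ $\{i,j\}\in\ell(P)$ — contradicting $\{i,j\}\notin\ell(P)$.

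The step I expect to require the most care is the very first one: extracting the two witnessing vertices $i,j$ from the failure of the clique-graph property, and checking that the edges certifying that $i$ and $j$ are non-isolated do not themselves force $\{i,j\}$ into $\ell(e)$. This is exactly where the characterization pays off — I only need $i,j$ non-isolated with $\{i,j\}$ absent, and such a pair exists precisely when $\ell(e)$ is not a clique graph. Everything after that is the same closure/idempotency bookkeeping already used in Propositions~\ref{l(e)-P3-free} and~\ref{P3-free-path}; in particular I rely on the standing assumption that $G$ has no dead or unreachable vertices, so that $e$ genuinely lies on an $(s,t)$-path and the path $P$ furnished by Lemma~\ref{lm:cl2=cl} really does yield an accepting input.
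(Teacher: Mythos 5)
Your proposal is correct and follows essentially the same route as the paper: both identify the failure of the clique-graph property with a pair of non-isolated, non-adjacent vertices $i,j$, use closedness of $\ell$ (via Lemma~\ref{lm:cl2=cl}) to extract an $(s,t)$-path $P=P'eP''$ avoiding $\{i,j\}$, and derive a contradiction with $\F{G}\le\C_n$ because $i$ and $j$ remain non-isolated in $\ell(P)$. No substantive differences.
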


\begin{proof}
 Let $i,j$ be vertices of positive degree in the label graph $\ell(e)$. Specifically, let $\{i,i'\}\in \ell(e)$ and $\{j,j'\}\in\ell(e)$. We will  prove that  $\{i,j\}\in \ell(e)$. This implies the proposition, since in any graph that is not a clique graph, there exists a pair of non-adjacent vertices of positive degrees.

  Suppose for contradiction  that  $\{i,j\}\notin \ell(e)$. Due to Lemma~\ref{lm:cl2=cl}, there exists an $(s,t)$-path $P = P' e P''$ such that  $\{i,j\}\notin \ell(P')$ and  $\{i,j\}\notin \ell(P'')$. Thus, the label graph $\ell(P)$ is not a clique graph, since  $\{i,i'\}\in \ell(P)$, $\{j,j'\}\in\ell(P)$. This contradicts the assumption  $\F{G}\le \C_n$.
\end{proof}

\begin{prop}\label{clique-path}
  For any  $(s,x)$-path $P$, the label graph $\ell(P)$ is a clique graph. Moreover, for any  $(s,x)$-path  $P$ and any edge $(x,y)\in E$,
  the maximal cliques of $\ell(P)$ and $\ell(x,y)$
  are comparable under the inclusion relation.
\end{prop}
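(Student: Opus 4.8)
The plan is to prove Proposition~\ref{clique-path} by induction on the length of the $(s,x)$-path $P$, mirroring the structure of the proof of Proposition~\ref{P3-free-path} but exploiting the sharper structural constraint imposed by clique graphs. The key difference from the $P_3$-free case is the strengthened conclusion: there the maximal cliques either do not intersect or are comparable, whereas here they must always be comparable. Since a clique graph has (apart from isolated vertices) a single nontrivial maximal clique, the ``non-intersecting'' alternative should be ruled out, which is exactly what makes the bound tight enough to yield $\Tg(\C_n)=2^n$ rather than merely $2^{\Omega(n)}$.

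First I would handle the base case. For a path $P$ of length zero (or the shortest relevant path), $\ell(P)$ reduces to a single edge label, and by Proposition~\ref{l(e)-clique} every $\ell(e)$ is already a clique graph, so the first assertion holds; for the comparability claim I would invoke Proposition~\ref{l(e)-clique} applied to both $\ell(P)$ and $\ell(x,y)$ together with the closure assumption. For the induction step I would write $P = \widetilde{P}\,(w,x)$ and assume $\ell(\widetilde{P})$ is a clique graph. The union $\ell(P) = \ell(\widetilde{P})\cup\ell(w,x)$ is a union of two clique graphs whose maximal cliques are comparable by the induction hypothesis, hence their union is again a clique graph (the larger of the two nontrivial cliques absorbs the smaller). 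This establishes that $\ell(P)$ is a clique graph.

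The main work, and the expected obstacle, is the comparability claim for an arbitrary extending edge $(x,y)$. Following the contradiction pattern of Proposition~\ref{P3-free-path}, I would suppose that the unique nontrivial maximal cliques $C_1$ of $\ell(P)$ and $C_2$ of $\ell(x,y)$ are \emph{incomparable}, so there exist $i\in C_1\setminus C_2$ and $j\in C_2\setminus C_1$. Here the clique structure forces a dichotomy: either $C_1\cap C_2\neq\varnothing$, in which case I pick $k\in C_1\cap C_2$ and argue exactly as in Proposition~\ref{P3-free-path} that $\{i,j\}\notin\ell(P)\cup\ell(x,y)$, use Lemma~\ref{lm:cl2=cl} to extract a $(y,t)$-path $P'$ with $\{i,j\}\notin\ell(P')$, and obtain a non-clique-graph label graph $\ell(P(x,y)P')$ containing $\{i,k\}$ and $\{j,k\}$ but not $\{i,j\}$; or $C_1\cap C_2=\varnothing$, which is the genuinely new case. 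In the disjoint case, $i$ and $j$ are vertices of positive degree in $\ell(P(x,y)P')$ with $\{i,j\}$ absent (again secured via Lemma~\ref{lm:cl2=cl}), and since a clique graph cannot have two non-adjacent vertices both of positive degree, this contradicts $\F{G}\le\C_n$.

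The delicate point I would watch carefully is guaranteeing that the edge $\{i,j\}$ can simultaneously be excluded from $\ell(P')$ via the idempotence of the closure (Lemma~\ref{lm:cl2=cl}) while confirming that the witnessing edges $\{i,i'\}$ and $\{j,j'\}$ genuinely survive in $\ell(P(x,y)P')$; this requires noting that $i,j$ already have positive degree in $\ell(P)\cup\ell(x,y)$ by construction, so the offending configuration persists regardless of how $P'$ is chosen. Once both cases yield a label graph that fails to be a clique graph, the contradiction with $\F{G}\le\C_n$ completes the induction, and the proposition follows.
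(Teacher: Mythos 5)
Your proof is correct and follows essentially the same route as the paper's: induction on the length of $P$, with comparability of the nontrivial cliques established by contradiction via $i\in C_1\setminus C_2$, $j\in C_2\setminus C_1$, the closure property (Lemma~\ref{lm:cl2=cl}) to find a $(y,t)$-path avoiding $\{i,j\}$, and the observation that two non-adjacent vertices of positive degree violate the clique-graph property. Your case split on whether $C_1\cap C_2$ is empty is harmless but unnecessary -- the positive-degree argument you use in the disjoint case already handles both cases uniformly, which is exactly what the paper does.
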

\begin{proof}
The induction on the length of $P$. The base case follows from Proposition~\ref{l(e)-clique}.
For the induction step, if $\ell (P)$ or $\ell(x,y)$ are empty graphs, the claim follows from
Proposition~\ref{l(e)-clique} or from the induction hypothesis.
Otherwise, let  $\ell (P)$ be a clique graph with a non-trivial clique $ A$, where $P$ is an  $(s,x)$-path  $P$, and $\ell(x,y)$ be a clique graph with a~non-trivial clique $ B$, where $(x,y)\in E$. Suppose for  contradiction that   $A$ and $B$ are incomparable under the inclusion relation. It means that there exist $i\in A\sm B$ and $j\in B\sm A$. Thus,  $\{i,j\}\notin \ell (P)\cup \ell(x,y)$.  Due to Lemma~\ref{lm:cl2=cl}, there exists a~$(y,t)$-path $P'$ such that  $\{i,j\}\notin \ell (P')$.  Therefore, the label graph of the path $P (x,y) P'$ is not a clique graph, since  $i$, $j$ are non-adjacent in the graph but both have positive degree. This contradicts to the assumption $\F{G}\le \C_n$. 
\end{proof}

\begin{proof}[of Theorem~\ref{cliques}]
  Let $\F{G} = \C_n$. Without changing the number of edges in $G$, we assume that the labeling function is closed.  From Proposition~\ref{clique-path}, it follows that the label graph  $\ell(P)$ of any $(s,t)$-path $P$ coincides with the label graph of some edge in the path. Thus, the number of edges in $G$ is at least the number of satisfying assignments of $\C_n$, i.e. $2^n$.
\end{proof}

The existence of (almost) uniformly hard functions looks specific for the disjunctive complexity. For example, it is open whether almost uniform hard functions exist with respect to circuit complexity.

\newpage
\appendix

\section{An $\NP$-completeness of an Evaluation a Function Represented by a Transition Graph}\label{app:Npcomplete}

We reproduce the result from~\cite{RV24}.

\begin{theorem}\label{Gell-value}
Computing $\F{G}(x)$ is $\NP$-complete. 
\end{theorem}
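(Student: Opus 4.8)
The plan is to establish two things separately: membership in $\NP$ and $\NP$-hardness. For membership, I would invoke the DAG normalization from Section~\ref{sec:defs}: every transition graph can be converted in polynomial time into an equivalent DAG. In a DAG any $(s,t)$-path is simple and hence has at most $|V|-1$ edges, so such a path is a polynomial-size certificate. Given a candidate path $P$, one verifies in polynomial time that it runs from $s$ to $t$ and that $\ell(P)=U_x$, the latter splitting into the two inclusions $\ell(P)\subseteq U_x$ and $U_x\subseteq\ell(P)$. This places the problem in $\NP$.

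For hardness I would reduce from SAT. Given a formula $\varphi$ with variables $z_1,\dots,z_m$ and clauses $C_1,\dots,C_k$, take the index set $U=\{C_1,\dots,C_k\}$ and build a layered DAG consisting of one ``diamond'' per variable: between vertices $a_{i-1}$ and $a_i$ place two parallel edges, a ``true'' edge and a ``false'' edge. Label the ``true'' edge of $z_i$ by the set of clauses containing the literal $z_i$, and the ``false'' edge by the set of clauses containing $\lnot z_i$. Put $s=a_0$, $t=a_m$, and choose the input $x$ with $U_x=U$ (the all-ones vector). The resulting $G$ is clearly a DAG and is produced in polynomial time.

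Then $(s,t)$-paths are in bijection with truth assignments to $z_1,\dots,z_m$, the edge selected in each diamond recording the value of $z_i$. Since every edge label is a subset of $U=U_x$, the inclusion $\ell(P)\subseteq U_x$ holds automatically, while $\ell(P)$ equals exactly the set of clauses satisfied by the corresponding assignment. Hence some $(s,t)$-path satisfies $\ell(P)=U_x$ iff some assignment satisfies all clauses, i.e. $\F{G}(x)=1$ if and only if $\varphi$ is satisfiable. Combined with membership, this gives $\NP$-completeness.

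The step I expect to require the most care is the exact-equality condition $\ell(P)=U_x$: the reduction works precisely because the ``$\subseteq$'' direction is made trivial (all labels are clause subsets) while the ``$\supseteq$'' direction encodes that every clause is covered, i.e. satisfiability. I would also be careful with the membership argument, which must not implicitly assume short walks in a graph with cycles; this is exactly why the DAG normalization is applied first, guaranteeing that $(s,t)$-paths are simple and of polynomial length so that a polynomial certificate exists.
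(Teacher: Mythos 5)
Your proposal is correct and follows essentially the same route as the paper: membership via the DAG normalization with an $(s,t)$-path as certificate, and hardness via the same SAT reduction with one two-edge ``diamond'' per variable, edges labeled by the clauses satisfied by the corresponding literal, and the all-ones input forcing every clause index to be covered. No substantive differences to report.
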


\begin{proof}
  W.l.o.g. we assume that $G$ is a DAG. It implies that computing $\F{G}(x)$ is in $\NP$. An $\NP$-certificate is just an $(s,t)$-path in $G$ marked by the unit set of $x$.

  To prove hardness, we reduce SAT to this problem.  Let CNF $C = \bigwedge_{j=0}^{m-1} D_j$ be an instance of SAT.

  Define a graph and a labeling function as follows. The vertex set is the set  $X$ of variables $x_1,\dots, x_n$ occuring in~$C$ and the initial vertex $x_0$. The terminal is  $x_{n}$. There are two edges  $e_{i,0}$, $e_{i,1}$ from   $x_{i-1}$ to $x_{i}$, $0< i\leq n$. It is clear that the graph is constructed in polynomial time by $C$. 
The Boolean variable set is  $[m]$. The edge label $\ell(e_{i,\al})$ consists of  $j$ such that $D_j=1$ if   $x_i=\al$ (in other words, the clause $D_j$ contains the literal $x_i^{\al}$). 
Choose $x = (1,\dots,1)$, in other words, $x^{-1}(1)=[m]$.

  Note that $(s,t)$-paths in the graph are in one-to-one correspondence with assignments of variables in $C$. The label set of the $(s,t)$-path corresponding to an assignment $\al$ is the set of indexes of clauses that are satisfied by this assignment. Thus, the existence of an  $(s,t)$-path marked by  $[m]$ is equivalent to satisfiability of~$C$. It proves the correctness of the reduction.
\end{proof}


\begin{thebibliography}{8}
\bibitem{AMS96}   N. Alon, Y. Matias and M. Szegedy. The space complexity of approximating the frequency moments.
Proc. ACM STOC, 20–29, 1996.
\bibitem{AroraBarak} S. Arora and B. Barak, Computational Complexity: A Modern Approach. Cambridge Univ. Press, 2009.
\bibitem{BJKS02}  Z. Bar-Yossef, T. Jayram, R. Kumar and D. Sivakumar. Information statistics approach to data stream
and communication complexity. Journal of Computer and System Sciences
Volume 68, Issue 4,  Pages 702-732.
\bibitem{CGV20}  
  C.-N. Chou, A. Golovnev and S. Velusamy, Optimal Streaming Approximations for all Boolean Max-2CSPs and Max-$k$sat, 2020 IEEE 61st Annual Symposium on Foundations of Computer Science (FOCS), Durham, NC, USA, 2020, pp. 330-341, 
\bibitem{GT01}  P. Gibbons and S. Trithapura. Estimating simple functions on the union of data streams. ACM SPAA,
2001.
\bibitem{Jukna} S. Jukna, Boolean Function Complexity Advances and Frontiers. Springer-Verlag: Berlin Heidelberg, 2012.
\bibitem{Mu05}  S. Muthukrishnan, Data Streams: Algorithms and Applications, Foundations and Trends® in Theoretical Computer Science, 2005. Vol. 1: No. 2, pp 117-236.
\bibitem{RV24} Rubtsov A.A., Vyalyi M.N. On Universality of Regular Realizability Problems. Problems of Information Transmission, 2024, Vol. 60, No. 3, pp. 209–232. 

\end{thebibliography}
\end{document}